\newtheorem{theorem}{Theorem}
\newtheorem{lemma}[theorem]{Lemma}
\newtheorem{claim}[theorem]{Claim}
\def\RR{\mathbb R}
\def\QQ{\mathbb Q}
\def\ZZ{\mathbb Z}
\def\cC{\mathcal C}
\def\cF{\mathcal F}
\def\cL{\mathcal L}
\def\cR{\mathcal R}
\def\cV{\mathcal V}
\def\b1{\mathbf 1}
\def\eps{\varepsilon}
\def\sees{\sim}
\newcommand{\raf}[1]{(\ref{#1})}
\newcommand{\poly}{\operatorname{poly}}
\newcommand{\polylog}{\operatorname{polylog}}
\newcommand{\argmax}{\operatorname{argmax}}
\newcommand{\conv}{\operatorname{conv.hull}}
\newcommand{\area}{\operatorname{area}}
\newcommand{\prem}{\operatorname{prem}}
\newcommand{\cells}{\operatorname{\cC}}
\newcommand{\Vis}{\operatorname{\cV}}
\newcommand{\bone}{\ensuremath{\boldsymbol{1}}}
\newcommand{\hide}[1]{}
\def \VCd {\text{VC-dim}}
\def \MO {\textsc{Max}}
\def \SO {\textsc{Subsys}}
\def \OPT {\textsc{Opt}}
\newtheorem{observation}{Observation}
\newtheorem{remark}{Remark}
\title{A Deterministic Bicriteria Approximation Algorithm for  the Art Gallery Problem
}
\author{
	Khaled Elbassioni\thanks{Khalifa University of Science and Technology, P.O. Box 127788, Abu Dhabi, UAE;
		(khaled.elbassioni@ku.ac.ae)}
}
\date{}
\begin{document}
	
\maketitle

\begin{abstract}
 Given a polygon $H$ in the plane, the art gallery problem calls for fining the smallest set of points in  $H$ from which every other point in $H$ is seen. We give a deterministic algorithm that, given any polygon $H$ with $h$ holes, $n$ rational veritces of maximum bit-length $L$, and a parameter $\delta \in(0,1)$, is guaranteed to find a set of points in $H$ of size $O\big(\OPT\cdot\log(h+2)\cdot\log (\OPT\cdot\log(h+2)))$ that sees at least a  $(1-\delta)$-fraction of the area of the polygon. The running time of the algorithm is polynomial in $n$, $L$ and $\log(\frac{1}{\delta})$ and quasi-polynomial in $h$, where $\OPT$ is the size of an optimum solution.
\end{abstract}

\section{Introduction}

In the art gallery problem, we are given a polygon $H$ with $n$ vertices and  $h$ holes, and two sets of points $N$, $G$ in $H$. 
Two points $p,q$ in $H$ are said to {\it see} each other if the line segment joining them lies inside $H$. The objective is to {\it guard} all the points in $N$ using candidate {\it guards} from $G$, that is, to 
find a subset $G'\subseteq G$ such that for every point $q\in N$, there is a point $p\in G'$ such that $p$ sees $q$.

This problem has received a considerable attention in the literature, see, e.g., \cite{BCKO08,O87} and the references therein. If one of the sets $G$ or $N$ is an (explicitly given) discrete set, then the problem can be easily reduced to a standard \textsc{SetCover} problem. For the case when $G=V$ is the vertex set of the polygon (called {\it vertex} guards), Ghosh \cite{G87, G10} gave an $O(\log n)$-approximation algorithm that runs in time $O(n^4)$ for simple polygons (resp., in time $O(n^5)$ for non-simple polygons). In fact, a direct application of the $\epsilon$-net theory, and the fact that the range space $\cF_H$ defined by the visibility polygons of points in $H$ has bounded VC-dimension, implies that the vertex guarding problem for simple polygons admits an $O(\log\OPT)$-approximation algorithm, where $\OPT$ here is the size of an optimum set of {\it vertex} guards.  This result was improved by King \cite{K13} to $O(\log\log\OPT)$-approximation in time $O(n^3)$ for simple polygons (resp.,  $O(\log h\cdot\log\OPT)$-approximation in time $O(h^2n^3)$ for non-simple polygons). The main ingredient for the improvement in the approximation ratio is the fact proved by King and Kirkpatrick \cite{KK11} that there is an $\epsilon$-net, in this case and in fact more generally when $G=\partial H$ (called {\it perimeter guards}), of size $O(\frac{1}{\epsilon}\log\log \frac{1}{\epsilon})$.

For the  (infinite) case when $N=G=H$ (called, sometimes, the {\it point} or {\it unrestricted} guarding problem), a {\it discretization} step, which selects a candidate discrete set of guards guaranteed to contain a near optimal-set, seems to be necessary for reducing the problem to \textsc{SetCover}. Such a discretization method was given in \cite{DKDS07} and was claimed to give an $O(\log\OPT)$-approximation for simple polygons (resp., $O(\log h\cdot\log(\OPT\cdot\log h))$-approximation in non-simple polygons) in {\it pseudo-polynomial} time $\poly(n,\Delta)$, where $\OPT$ is the size of an optimal solution, and the {\it stretch} $\Delta$ is defined as the ratio between the longest and shortest distances between vertices of $H$. However, later on, an error in one of the claims in \cite{DKDS07} was pointed out by Bonnet and Miltzow \cite{BM16},  who also suggested another discretization procedure that results in an $O(\log\OPT)$-randomized approximation algorithm, after making the following two assumptions:
\begin{itemize}
	\item[(A1)] vertices of the polygon have integer components, given by their binary representation; 
	\item[(A2)] no three extensions meet in a point of $H$ that is not a vertex, and no three vertices are collinear; here an extension is a line passing through two vertices.
\end{itemize}	                                                                                                                      
Under these assumptions, it was shown in \cite{BM16} that one can use a grid $\Gamma$ of cell size $\frac{1}{D^{O(1)}}$ such that $\OPT_\Gamma=O(\OPT)$, where $\OPT_{\Gamma}$ is the size of an optimum set of guards {\it restricted} to $\Gamma$, and $D$ is the diameter of the polygon. Then one can use the algorithm suggested by Efrat and Har-Peled \cite{EH06} who gave an efficient implementation of the {\it multiplicative weights update} (MWU) method of Br\"{o}nnimann and Goodrich \cite{BG95} in the case when the locations of potential guards are restricted to a dense grid $\Gamma$. More precisely, the authors in \cite{EH06} gave a {\it randomized} $O(\log\OPT_{\Gamma})$-approximation algorithm for simple polygons (resp., $O(\log h\cdot\log(\OPT_{\Gamma}\cdot\log h))$-approximation in non-simple polygons) in expected time $O(n\OPT_{\Gamma}^2\cdot\log\OPT_{\Gamma}\cdot\log(n\OPT_{\Gamma})\log^2 \Delta)$ (resp.,  $nh\OPT_{\Gamma}^3\polylog n\cdot\log^2 \Delta)$), where $\Delta$ here denotes the ratio between the diameter of the polygon and the grid cell size. Note that this would imply a randomized (weakly) polynomial-time polylogarithmic factor-approximation algorithm for the unrestricted guarding case, if one can show that for a rational polygon (i.e., one with vertices having rational coordinates), there is a near optimal set which has also a rational description (and hence, the bit length representation of the output guarding set can be bounded by the bit length of the input). It is well-known that, if a rational polygon $H$ can be guarded with one guard, then there is an optimal guard placement with rational coordinates. Interestingly, Abrahamsen, Adamaszek and Miltzow~\cite{AAM17} showed that there is a rational polygon that can be guarded with three guards, but requires four guards if the guards are required to have rational coordinates. More recently, Meijer and Miltzow~\cite{MM22} strengthened this result by giving an example of a rational polygon, where an optimal solution of size two must to be irrational, thus settling the question of the minimum number of guards required to have irrational coordinates. The latter result implies that any approximation algorithm for the art gallery problem that runs in polynomial time in the {\it bit model of computation} cannot have an approximation factor better than $\frac32.$  While it is not clear whether this lower bound can be substantially improved, the authors in \cite{BM16} showed that the ratio $\Delta$ above can be chosen, under assumption (A2), to be $D^{O(1)}$, which implies by (A1) that $\log \Delta$ is linear in the maximum bit-length of a vertex coordinate. In particular, under assumptions (A1) and (A2), there is a polylogarthmic rational approximation for the optimal guarding set.



On the hardness side, the vertex (and point) guarding problem for simple polygons is known to be APX-hard~\cite{ESW01}, while the problem for non-simple polygons is as hard as \textsc{SetCover} and hence cannot be approximated by a polynomial-time algorithm with ratio less than 
$(1-\epsilon)\ln n$, for any $\epsilon>0$, unless $\text{\it P}=\text{\it NP}$ \cite{DS14}.
More recently, it was shown by Abrahamsen, Adamaszek and Miltzow~\cite{AAM22} that the point guarding problem is $\exists\RR$-complete, that is, the problem is equivalent under polynomial time reductions to deciding whether a system of polynomial equations with integer coefficients over the real numbers has a solution.

\paragraph{Our contribution.}~Given that there does not yet exist a polylogarithmic approximation algorithm for (the general variant of) the art gallery problem, we investigate in this paper a more relaxed question: given $\delta\in(1,0)$, can we obtain an $\alpha:=\polylog(h,\OPT)$-approximation of the optimal set by a permitting a $\delta$-fraction of the polygon to be non-guarded? Such approximation guarantees are typically called $(\alpha,1-\delta)$-{\it bicriteria} approximation, and can be, for all practical purposes, as good as normal approximations, provided that $\delta$ can be made sufficiently small. While it is straightforward to get such an approximation algorithm with running time $\poly(\frac1\delta)$ (see Appendix~\ref{rand-samp}), the question arises  whether the running time can be made polylogarithmic in $\frac1\delta$.  In~\cite{E17,E23}, we gave an affirmative answer to this question, in fact for any range space of bounded VC-dimension (under some mild assumptions), by presenting a {\it randomized} $(\alpha,1-\delta)$-bicriteria approximation algorithm whose running time is polylogarithmic in $\frac1\delta$. In this paper, we strengthen this result by giving a {\it deterministic}  algorithm with the same guarantees for the art gallery problem. 
\begin{theorem}\label{t1}
	Given a polygon $H$ with  $h$ holes and $n$ vertices with rational representation of maximum bit-length $L$ and $\delta>0$, there is a deterministic algorithm that finds in $\poly(L,n,$ $h^{\log\log (h\cdot\OPT)},\log\frac{1}{\delta})$ time a set of points in $H$ of size $O(\OPT\cdot\log(h+2)\cdot\log (\OPT\cdot\log(h+2)))$ and bit complexity $\poly(L,n,h,\log\frac{1}{\delta})$, guarding at least $(1-\delta)$-fraction of the area of $H$, where $\OPT$ is the value of the optimal solution. 
\end{theorem}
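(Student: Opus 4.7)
The plan is to derandomize the MWU-based bicriteria approximation of the author's earlier work \cite{E17,E23}, specialized to the art-gallery range space. First I would reduce the infinite candidate guard set to a finite grid $\Gamma\subset H$ of side $\eta$ chosen so that $\log(1/\eta)=\poly(L,n,\log\tfrac1\delta)$; this gives $|\Gamma|=2^{\poly(L,n,\log\frac1\delta)}$ and each grid point of bit length $\poly(L,n,\log\tfrac1\delta)$. A symbolic perturbation of the polygon (to enforce the general-position condition of \cite{BM16}) combined with a perturbation argument for visibility polygons shows that any set of $\OPT$ guards can be replaced by an $\OPT$-sized set on $\Gamma$ whose visibility polygons together cover all but a $\delta$-fraction of the area of $H$. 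This gives us a finite candidate set of bounded bit length without baking the assumptions (A1)--(A2) into the theorem statement.

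Next I would apply the Br\"onnimann-Goodrich MWU procedure to $(\Gamma,H)$ with the area measure as the initial weight $w$ on $H$. At each outer iteration, the algorithm must find a subset $S\subseteq\Gamma$ of size $r=O(\log(h+2)\cdot\log(\OPT\log(h+2)))$ whose visibility polygons capture all but a $\tfrac{1}{2\OPT}$-fraction of the current $w$-weight, and then double $w$ on $H\setminus\bigcup_{p\in S}\cV(p)$. Such an $S$ is a weighted $\tfrac{1}{2\OPT}$-net for the visibility range space and exists at the stated size by the King-Kirkpatrick $\epsilon$-net theorem together with its extension to $h$ holes. To construct $S$ deterministically I would (i) observe that $w$ is piecewise constant on the cells of the overlay arrangement of the visibility polygons chosen so far, a set of $\poly(n,|\Gamma|)$ rational cells; (ii) compute a deterministic $\epsilon$-approximation of $(H,w)$ on these cells via Matousek-style iterated halving; and (iii) extract the $\epsilon$-net by a deterministic set-cover step on the approximation. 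Standard MWU analysis caps the number of outer iterations at $O(\OPT\log(h+2))$, yielding an output of the target size after a doubling search over $\OPT$.

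The principal obstacle is reconciling three demands that ordinarily conflict: determinism, polylogarithmic dependence on $\tfrac1\delta$, and a weight measure supported on a continuum. Standard deterministic $\epsilon$-approximation constructions require a finite ground set whose size governs the running time, which would be $\poly(\tfrac1\delta)$ if applied to a naive grid discretization of $H$. The way around this is the observation in step (i) above: the weight distribution at every MWU iteration is determined by a combinatorial arrangement of polynomially many cells, independently of $\tfrac1\delta$, so $\epsilon$-approximations and $\epsilon$-nets can be built in the finite-support regime. A secondary but delicate task is propagating bit-complexity bounds through iterated rational arithmetic on visibility-polygon vertices and weight updates, ensuring every intermediate quantity has bit length $\poly(L,n,\log\tfrac1\delta)$ and that the $(1-\delta)$-area guarantee survives both the discretization step and the MWU rounding.
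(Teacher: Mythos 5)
Your high-level plan (MWU plus deterministic $\epsilon$-nets for a bounded-VC range space) is in the right family, but two of its load-bearing steps are gaps, and the second one is precisely the problem this paper is about. First, the discretization: the claim that a grid $\Gamma$ of cell size $\eta$ with $\log(1/\eta)=\poly(L,n,\log\frac1\delta)$ admits an $\OPT$-sized subset covering all but a $\delta$-fraction of the area is asserted via ``symbolic perturbation,'' not proved, and it is exactly the step known to be delicate. The Bonnet--Miltzow grid argument needs assumptions (A1)--(A2) on the \emph{input} polygon; perturbing the polygon changes its optimum, so it does not transfer. Moreover, optimal guards may be irrational (Abrahamsen et al., Meijer--Miltzow) and may lie arbitrarily close to a reflex vertex $v$: moving such a guard by $\eta$ rotates the window through $v$ by an angle of order $\eta/d(g,v)$ and can shift an area of order $D^2\eta/d(g,v)$, so no grid resolution computable from $L,n,\delta$ alone controls the loss. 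This paper deliberately avoids discretizing the guard set for exactly this reason and instead optimizes over the continuum.

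Second, even granting the discretization, $|\Gamma|=2^{\poly(L,n,\log\frac1\delta)}$, and nothing in your steps (i)--(iii) explains how to select the next guard(s) in time polynomial in $L,n,\log\frac1\delta$ rather than in $|\Gamma|$. Your observation that the weight on $H$ is piecewise constant on polynomially many cells makes the \emph{weight function} succinct, but the maximization/net-extraction is over the candidate guards, and within a single cell of any polynomial-size arrangement the weighted visibility area still varies nonlinearly with the guard position. The paper's main technical content is exactly this oracle: write $w_t(\Vis_{H_t}(q))$ as a sum of $\poly(n,h,\log\frac1\delta)$ ratios of polynomials in the coordinates of $q$ (per cell of a refined arrangement), maximize by solving a bivariate polynomial system via quantifier elimination, and round the arrangement to a coarse grid at every iteration so the bit length does not compound across the $\poly$ many iterations. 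Your proposal leaves this entirely unaddressed. Separately, your Br\"onnimann--Goodrich bookkeeping is off: a set $S\subseteq\Gamma$ of size $O(\log(h+2)\log(\OPT\log(h+2)))$ covering all but a $\frac{1}{2\OPT}$-fraction of the weight does not exist in general (a weighted $\frac{1}{2\OPT}$-net has size $\Theta(\OPT\cdot d\log(d\,\OPT))$, and in BG the output is that one net, not a union over $O(\OPT\log(h+2))$ rounds); the paper instead uses a Garg--K\"onemann-style multiplicative update with a cap $T=\frac{1}{\eps^2}\ln\frac1\delta$ on per-point coverage to get a fractional solution, which is then rounded once by a deterministic $\epsilon$-net.
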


As most of the previous results~\cite{AES12,AP20,BM17,EH06}, we prove Theorem~\ref{t1} using a variant of the MWU method~\cite{AHK06,BG95}, which can be applied generally for finding an approximate minimum hitting set for a range space $\cF=(Q,\cR)$ given by its set of points $Q$ and set of ranges $\cR\subseteq 2^Q$. When the set $Q$ is finite, the MWU method can be implemented by maintaining weights on the points of $Q$, which are multiplicatively updated from one iteration to the next. For range spaces $\cF$ of bounded VC-dimension, this approach yields an $O(\log\OPT)$-approximation algorithm in time polynomial in $|Q|$.  For the art gallery problem, the range space is defined by the visibility polygons of the points in $H$ which implies that set of points $Q$ is infinite. The straightforward extension of the approach in~\cite{BG95} to this infinite (or continuous) case does not seem to work, since the bound on the number of iterations depends on the area of the regions created during the course of the algorithm, which can be arbitrarily small (see Appendix~\ref{sec:BG} for more details). 
Nevertheless, it turns out that a simple modification of this method can be made to work. We highlight the main ingredients below.
\begin{itemize}
	
	\item We follow~\cite{ERS05,PA95} in obtaining first a {\it fractional} solution $\widehat \mu$ using a variant of MWU method which resembles (in essence) the one used in~\cite{GK07} for approximately solving {\it covering linear programs}. At a high level, the MWU algorithm (implicitly) maintains weights on the points of $H$, initially all set to $1$. In each iteration, it finds (approximately) a point $p$ whose visibility polygon has maximum total weight, then it decreases the weights of all points seen by $p$. As in~\cite{GK07}, to be able to guarantee that the algorithm terminates in polynomial time, a small twist is added to this (standard) procedure. Namely, when the weight of a point has been decreased ''enough'', it is (implicitly) removed before  calling the maximization procedure (typically called {\it maximization oracle}). Once the total area of all removed points is at least $(1-\delta)$-fraction of the area of $H$, the MWU algorithm terminates. The set $\widehat P$ of points returned by the sequence of calls to the maximization oracle defines the support of the fractional solution $\widehat\mu$. 
	
	\item Given the fractional solution $\widehat\mu:\widehat P\to\RR_+$, and the fact that the range space $\cF_H$ has bounded VC-dimension, we use the known deterministic methods for computing $\epsilon$-nets~\cite{BCM99,CM96,M91} to get a guarding set satisfying the bounds in Theorem~\ref{t1}.
	
	\item ({\bf Main new ingredient}) The success of the above approach hinges on being able to implement the maximization oracle called in each iteration of the MWU algorithm. Under the bit model of computation, we show that such a maximization oracle can be implemented in deterministic polynomial time. The idea, following \cite{NT94}, is to express the function representing the total weight of the visibility region of a point $q\in H$ as a sum of continuous functions, each of which is a ratio of two polynomials in two variables, namely, the $x$ and $y$-coordinates of $q$. Then maximizing over $q$ amounts to solving a system of two polynomial equations of degree $\poly(n,h,\log\frac{1}{\delta})$ in two variables, which an be solved using {\em quantifier elimination techniques}, e.g., \cite{BPR96,GV88,R92}. Since we need to apply this procedure in each iteration of the MWU algorithm, a technical hurdle that we have to overcome is that the required bit length may grow from one iteration to the next, resulting in an exponential blow-up in the bit length needed for the computation.  To deal with this issue, we need to round the arrangement of lines arising in the course of the algorithm in in each iteration so that the total bit length in all iterations remains bounded by a polynomial in the input size.  
    
\end{itemize}

\medskip

\paragraph{Comparison with other approaches.}~In contrast, we mention the following standard approaches (see Appendix~\ref{alt} for details):

\begin{itemize}
	\item Standard {\it greedy} approach: we keep adding a point that sees the (approximately) largest area of the non-guarded region, until only a $\delta$-fraction of the area remains unguarded. This can be easily shown (see~\cite{CEH07} and also Section~\ref{greedy}) to have an approximation ratio of $O(\log\frac1\delta)$, which is much worse than the bound given by our theorem when $\delta $ is sufficiently small. Interestingly, our approach, which can be considered as a generalization of the greedy strategy (combined with MWUs) allows to 'move' the $\log\frac1\delta$ factor from the approximation ratio to the running time.

\item Direct Application of the Br\"{o}nnimann-Goodrich Algorithm\cite{BG95}: it can be shown (see~\cite{E16}) that the algorithm can be extended to work for the (full coverage version of) the art gallery problem, but the running time may depend on the area of the smallest cell $R$ of the arrangement created in the course of the algorithm. Since $R$ can be arbitrarily small, there is no guarantee that such an extension will terminate in polynomial (ore even finite) time.    
\end{itemize}
 
\medskip
 
The rest of the paper is organized as follows. In the next section we define our notation and recall some preliminaries. In Section~\ref{sec:algorithm},  for completeness,  we describe the MWU algorithm from~\cite{E23} and give a slightly simplified version of its analysis written in the context of the art gallery problem. The main result of the paper, which is the deterministic implementation of the maximization oracle, is given in Section~\ref{sec:det}. Details of some the alternative approaches (mentioned above) for obtaining a bicriteria approximation algorithm for the art gallery problem are  given briefly in Appendix~\ref{alt}.

\section{Preliminaries}\label{sec:prelim} 
\subsection{Notation}
We regard a polygon $H$ as a set of points in $\RR^2$ (consisting of all the points inside $H$ including the boundary $\partial H$). We assume that $H$ has $n\ge 3$ vertices and $h\ge 0$ holes and denote the set of vertices of $H$ by $V(H)$. 
Two points $p,q\in H$ are said to {\it see} each other, denoted by $p\sees q$, if the line segment joining them is a subset of $H$. 
For a point $q\in H$, we denote by $\Vis(q)=\{p\in H~:~p\sees q\}$ the visibility polygon of $q$ with respect to (w.r.t.) $H$; see Fig.~\ref{fig12}.   
For a set of points $Q\subseteq H$, we also write $\Vis_Q(q):=\{p\in Q:~q\sees p\}$ be the set of points in $Q$ that see $q$.

An instance of the art gallery problem is given by a polygon $H$ of $n$ vertices and $h$ holes and a coverage parameter $\delta\in(0,1)$. A feasible solution to the given instance is defined by a set of points $P\subseteq H$ such that every point in $H$ is seen from some point in $P$, that is, $H=\cup_{p\in P}\Vis(p)$.  An optimal solution is a feasible solution that minimizes $|P|$, and its value is denoted by $\OPT$. A $(1-\delta)$-feasible solution is defined by a set of points $P\subseteq H$ such that $\area(\cup_{p\in P}\Vis(p))\ge(1-\delta)\area(H)$.
For $\alpha\ge 1$, a bicriteria $(\alpha,1-\delta)$-approximate solution is a $(1-\delta)$-feasible solution $P$ such that $|P|\le\alpha\cdot\OPT$.
An $(\alpha,1-\delta)$-approximate {\it fractional} solution $(P,\mu)$ is defined by a finite set $P\subseteq H$ and a nonnegative function $\mu:P\to\RR_+$ such that $\mu(\Vis(q)):=\sum_{p\in P\cap \Vis(q)}\mu(p)\ge 1$ for al $q\in H'$ and $\mu(P)\le\alpha\cdot\OPT$, for some $H'\subseteq H$ having $\area(H')\ge(1-\delta)\area(H)$.

\begin{figure}	
	\centering  
	\begin{subfigure}{.4\textwidth}
		\includegraphics[width=2.5in]{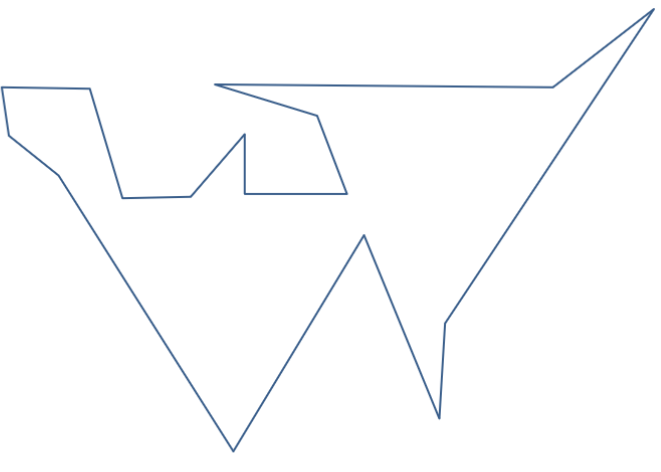}
		\caption{A simple polygon $H$.}   
		\vspace{0.15in}                                          \label{fig1}
	\end{subfigure}
	\hspace{0.18in}
	\begin{subfigure}{.4\textwidth}
		\includegraphics[width=2.5in]{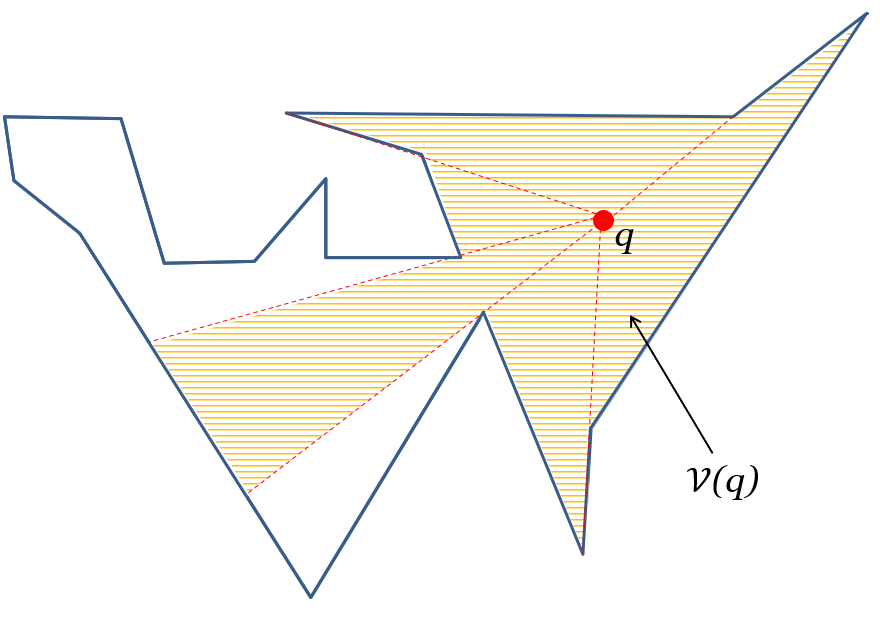}
		\caption{The visibility polygon of a point $q \in H$.} 
		\label{fig2}
	\end{subfigure}
\caption{Visibility regions in a polygon.}\label{fig12}
\end{figure}

\subsection{Range Spaces of Bounded VC-Dimension and $\epsilon$-Nets}\label{sec:VCd}
Given a range space $\cF=(Q,\cR)$ defined by a set of points $Q$ and a set of ranges $\cR\subseteq 2^Q$, its  {\it VC-dimension} is defined as follows. A finite set $P\subseteq Q$ is said to be {\it shattered} by $\cF$ if $\cR_{|P}:=\{R\cap P~:~R\in\cR\}=2^P$. The VC-dimension of $\cF$, denoted $\VCd(\cF)$, is the cardinality of the largest subset of $Q$ shattered by $\cF$. If arbitrarily large subsets of $Q$ can be shattered then $\VCd(\cF)=+\infty$. Given a polygon $H$, we can define a range space $\cF_H=(Q,\cR)$, where $Q=H$ and $\cR=\{\Vis(q)~:~q\in H\}$. Valtr \cite{V98} showed that $\VCd(\cF_H)\le 23$ for any simple polygon $H$ (i.e., a polygon without holes) and that $\VCd(\cF_H)=O(\log h)$ for any polygon $H$ with $h$ holes. For simple polygons, this has been improved to $\VCd(\cF_H)\le 14$ by Gilbers and Klein \cite{GK14}.                                                         

Given a range space $\cF=(Q,\cR)$, a finite set $P\subseteq Q$, a nonnegative (weight) function $\mu:P\to\RR_+$, and a parameter $\epsilon\in(0,1)$, an {\it $\epsilon$-net} for $\cR$ (w.r.t. $\mu$) is a set $P'\subseteq P$ such that $P'\cap R\neq\emptyset$ for all ranges $R\in\cR$ satisfying $\mu(R)\ge\epsilon\cdot\mu(P)$ (where for $Q'\subseteq Q$, we write $\mu(Q'):=\sum_{q\in P\cap Q'}\mu(q)$). 
We say that a range space $\cF$ admits an $\epsilon$-net of size $s_{\cF}(\cdot)$, if for any $\epsilon\in(0,1)$, there is an $\epsilon$-net of size $s_{\cF}(\frac{1}{\epsilon})$. For range spaces of VC-dimension $d$, it is well-known \cite{HW87,KPW92} that a random sample w.r.t. the probability measure $\frac{\mu}{\mu(P)}$ of size $s_{\cF}(\frac{1}{\epsilon})=O(\frac{d}{\epsilon}\log\frac{1}{\epsilon})$ is an $\epsilon$-net with (high) probability $\Omega(1)$. 
It is also known~\cite{BCM99,CM96,M91} that an $\epsilon$-net for $\cR$ of size $s_{\cF}(\frac{1}{\epsilon})=O(\frac{d}{\epsilon}\log\frac{d}{\epsilon})$
can be computed deterministically in time $O(d)^{3d}\frac{1}{\epsilon^{2d}}\log^d(\frac{d}{\epsilon})|P|$, under the assumption that 
the range space is given by a {\it subsystem oracle} \SO$(\cF,Q')$ that, given any finite $Q'\subseteq Q$, returns the set of ranges $\cR_{|Q'}$ in time $O(|Q'|)^{d+1}$.  It is easy to see that, for the range space $\cF_H$ defined by a polygon $H$ and  a finite set $Q'\subseteq H$, such a subsystem oracle \SO$(\cF,Q')$  can be implemented in $O(n(h+1)|Q'|^2\log(n(h+1)|Q'|))$ time  by computing the arrangement defined by the visibility polygons of the points in $Q'$; see~\cite{EH06} and the next section for more details.

\subsection{The MWU Algorithm}   \label{sec:algorithm}
The algorithm, shown as Algorithm \ref{alg} below, proceeds in iterations $t=0,1,2,\ldots$. At any iteration $t$, the algorithm maintains a (possibly multi-) set of candidate guard locations $P_t\subset H$ (initially $P_0=\emptyset$), which is extended by exactly one point in each iteration. 
Let us define the {\it active} polygon $H_t\subseteq H$ as
\begin{align}
H_t := \{q \in H~:~|P_t\cap \Vis(q)| < T\}.\label{Ht}
\end{align}

In the algorithm , we assume that there is an {\it oracle} \MO$(H,H',w,\nu)$ that, given a polygon $H$, a set of points $H'\subseteq H$, a weight function $w:H\to\RR_+$, and an accuracy parameter $\nu\in(0,1)$, returns a point $p\in H$ such that
$$
w(\Vis_{H'}(p))\geq (1-\nu)\max_{q\in H}w(\Vis_{H'}(q)).
$$
(In that sense, the algorithm can be considered as an iterative variant of a weighted greedy algorithm.)
We will show in Section~\raf{sec:det} how such an oracle can be implemented in deterministic polynomial time.

\begin{algorithm}[H]
	\label{alg}
	\SetAlgoLined
	\KwData{A polygon $H$ and approximation accuracies $\eps,\delta,\nu\in(0,1)$}
	\KwResult{A $(\frac{1+2\eps}{1-\nu},1-\delta)$-approximate solution $P\subseteq H$}
	
	$t\gets 0$; $P_0\gets\emptyset$; $T\gets\frac1{\eps^2}\ln\frac{1}{\delta}$; $H_0\gets H$\\
			$w_0(q)\gets 1$ for all $q\in H_0$\\
	\While{$\area(H_t)\ge\delta\cdot\area(H)$}{
		$p_{t+1}\gets\MO(H,H_t,w_t,\nu)$ \label{s-oracle}\\
		$P_{t+1}\gets P_{t}\cup\{p_{t+1}\}$\\
		\For{$q\in H$	}{
			\tcp{Implicitly}	    
			$w_{t+1}(q) \gets \left\{\begin{array}{ll}(1-\eps)w_t(q)&\text{ if }q\in \Vis(p_{t+1})\cap H\\
	    w_t(q)&\text{ otherwise }\end{array}\right.$ \label{s-wt-red}}
		$t \gets t+1$\\
	}
    $\widehat P\gets P_t$; $\widehat \mu(p)\gets\frac1T$ for all $p\in\widehat P$; $\epsilon\gets\frac1{\widehat\mu(\widehat P)}$\\
	\Return an $\epsilon$-net of size $s_{\cF_H}(\frac1\epsilon)$ from the set $P_t$
	\caption{The MWU (weighted greedy) algorithm.}
\end{algorithm}

Given the set of points $P_t$ in iteration $t$, we get a partition of $H$ into a set of regions $\cR_t$ such that for any $R\in \cR_t$ all points in $R$ see the the same subset of points from $P_t$, that is, for all $p,q\in R$, $\Vis_{P_t}(p)=\Vis_{P_t}(q)$; see Fig.~\ref{fig34} for an illustration. Such a partition can be found in polynomial time in $n$, $h$ and $|P_t|=t$.  Indeed, it can be easily seen that the set of (visibility polygons of) the points in $P_t$, together with the set of vertices $V(H)$ of the polygon $H$, induces an arrangement of line segments (in $\RR^2$) of total complexity (say, number of faces) $r_t:=O(n(h+1)t^2)$; see., e.g.,~\cite{EH06}.  Thus, we can construct the set of (disjoint convex) cells of this arrangement, call it $\cC_t(H)$, in $O(n(h+1)t^2\log(n(h+1)t))$ time\footnote{For simplicity of presentation, we do not attempt here to optimize the running time, for instance, by maintaining a data structure for computing $\cells_t(H)$, which can be efficiently updated when a new point is added to $P_t$.}, and label each cell $R\in\cC_t(H)$ of the arrangement by the set of points $P_t(R)$ from $P_t$ it sees (e.g., using a sweep line algorithm), that is, $P_t(R)=\{p\in P_t~:~p\sees q\text{ for all }q\in R\}$. 

\begin{figure}
	\centering  
	\begin{subfigure}{.4\textwidth}
		\includegraphics[width=2.5in]{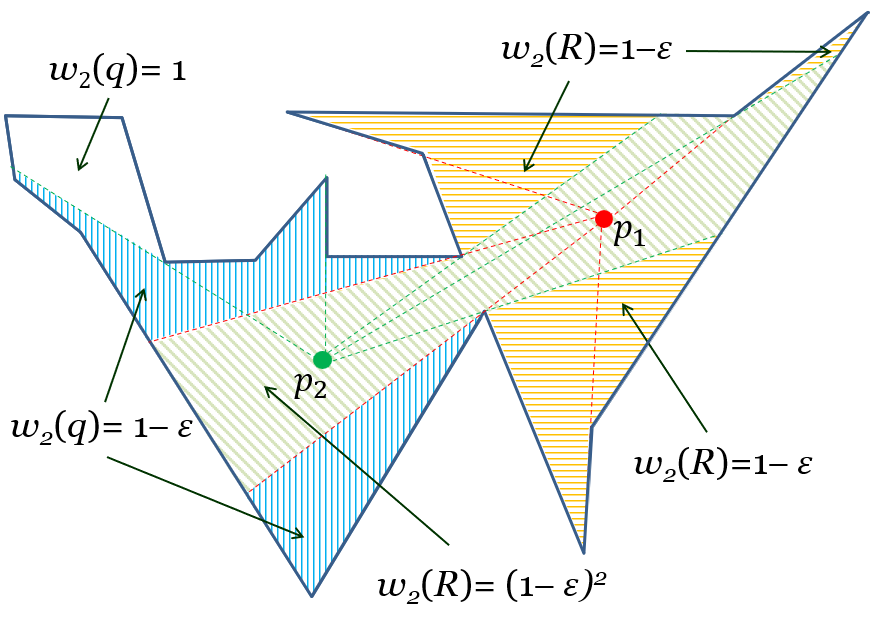}
		\caption{The partition $\cR_2$ of the polygon $H$ defined by the set of points $P_2=\{p_1,p_2\}$. Each (color) pattern defines one region of the partition. The weight function $w_2$ is also shown, where all points in one region have the same weight. For $T=2$, $H_2$ does not include the diagonally  striped region.}                                             \label{fig3}
	\end{subfigure}
	\hspace{0.18in}
	\begin{subfigure}{.4\textwidth}
		\includegraphics[width=2.5in]{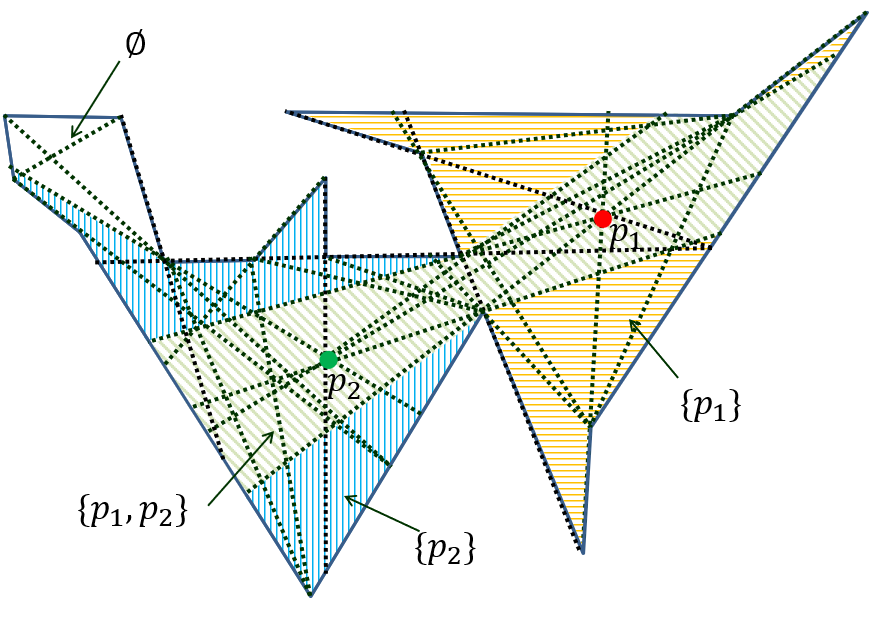}
		\caption{The set $\cC_2(H)$  of convex cells defined by the arrangement of lines induced by the set of points in the union of the vertices of $H$ and $P_2$. The sets $P_2(R)$ are shown for some cells $R\in\cC_2(H)$.} 
			\vspace{0.33in}
		\label{fig4}
	\end{subfigure}
\caption{The partition of $H$ defined by the set of points $P_t$  for $t=2$.}	\label{fig34}
\end{figure}

\subsection{Analysis}\label{sec:analysis}
We first make a few easy observations that will be useful in the analysis of the algorithm.

\begin{observation}\label{ob1}
	For all $t\ge 0$, $R\in\cells_t(H)$, it holds that $w_t(R)=(1-\eps)^{|P_t(R)|}w_0(R)$. Consequently, $w_t(H_t)>(1-\eps)^Tw_0(H_t)$.
\end{observation}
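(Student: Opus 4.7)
The approach is a short pointwise-then-integrate argument driven by the defining property of the cells in $\cells_t(H)$: inside a single cell $R$ of the arrangement, every point sees exactly the same subset $P_t(R) \subseteq P_t$ (this uses the symmetry of visibility, so that $P_t \cap \Vis(q) = P_t(R)$ for every $q \in R$). Given this, I would proceed by induction on $t$ for the first equality. The base case $t=0$ is immediate since $P_0(R) = \emptyset$ and $w_0(R) = w_0(R)$. For the inductive step, note that the update rule in line~\ref{s-wt-red} multiplies $w_t(q)$ by $(1-\eps)$ precisely when $q \in \Vis(p_{t+1})$, equivalently (by symmetry) when $p_{t+1} \in \Vis(q)$. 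Because all points of a cell $R' \in \cells_{t+1}(H)$ see the same subset $P_{t+1}(R')$, the factor $(1-\eps)$ is applied uniformly on $R'$ exactly $|P_{t+1}(R') \setminus P_t(R')| \in \{0,1\}$ times in the $(t{+}1)$-st iteration (where $R'$ is contained in a unique cell of $\cells_t(H)$). Chaining the update across iterations gives $w_t(q) = (1-\eps)^{|P_t(R)|}w_0(q)$ pointwise on $R$, and integrating over $R$ yields $w_t(R) = (1-\eps)^{|P_t(R)|}w_0(R)$.

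For the consequence, the key observation is that the indicator of $H_t$ is constant on each cell $R \in \cells_t(H)$: indeed, by definition $H_t = \{q \in H : |P_t \cap \Vis(q)| < T\}$, and $|P_t \cap \Vis(q)|$ equals the cell-constant value $|P_t(R)|$ for every $q \in R$. Thus $H_t$ is exactly the union of those cells $R \in \cells_t(H)$ with $|P_t(R)| < T$, i.e., with $|P_t(R)| \le T-1$. Using the first part of the observation together with $0 < 1-\eps < 1$, we get $w_t(R) = (1-\eps)^{|P_t(R)|} w_0(R) \ge (1-\eps)^{T-1}w_0(R) > (1-\eps)^T w_0(R)$ for each such cell. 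Summing over the cells composing $H_t$ yields the strict inequality $w_t(H_t) > (1-\eps)^T w_0(H_t)$.

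There is no substantive obstacle here; the only thing to be careful about is the book-keeping between the pointwise definition of $w_t$ and its integrated version $w_t(R) = \int_R w_t$, together with the symmetry of the visibility relation, which ensures $P_t(R)$ coincides with $P_t \cap \Vis(q)$ for any $q \in R$. Once these two points are made explicit, both parts of the observation are routine.
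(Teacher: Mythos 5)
Your proof is correct and follows essentially the same route as the paper: induction on $t$ using the fact that $\cells_{t+1}(H)$ refines $\cells_t(H)$ and that the multiplicative update is applied uniformly on each cell, followed by summing $w_t(R)=(1-\eps)^{|P_t(R)|}w_0(R)$ over the cells with $|P_t(R)|<T$ that make up $H_t$. The only cosmetic difference is that you argue pointwise and then integrate, whereas the paper works directly with the cell weights $w_t(R)$; the content is identical.
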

\begin{proof}
	The first claim can be easily seen by induction on $t\ge0$ with the base case being trivial. Assume the claim holds for some $t>0$, and consider $R\in\cells_{t+1}(H)$. Note that $\cells_{t+1}(H)$ is a refinement of  $\cells_t(H)$ as some of the sets in $\cells_t(H)$ may be partially seen by the new point $p_{t+1}$. If $R\not\subseteq \Vis(p_{t+1})$, then $w_{t+1}(R)=w_t(R)=(1-\eps)^{|P_t(R)|}w_0(R)=(1-\eps)^{|P_{t+1}(R)|}w_0(R)$; otherwise, line~\ref{s-wt-red} of the algorithm implies that $w_{t+1}(R)=(1-\eps)w_t(R)=(1-\eps)^{|P_{t}(R)|+1}w_0(R)=(1-\eps)^{|P_{t+1}(R)|}w_0(R)$.
	
	The second claim is immediate from the first claim as def.~\raf{Ht} of $H_t$ implies
	\begin{align*}
	w_t(H_t)&=w_t\left(\bigcup_{R\in\cells_t(H):~|P_t(R)|<T}R\right)= \sum_{R\in \cells_t(H):~|P_t(R)|<T}w_t(R)\\&=\sum_{R\in\cells_t(H):~|P_t(R)|<T}(1-\eps)^{|P_t(R)|}w_0(R)\\&>(1-\eps)^T\sum_{R\in\cells_t(H) :~|P_t(R)|<T}w_0(R)=(1-\eps)^Tw_0(H_t).
	\end{align*}
\end{proof}


The analysis of the algorithm follows from the following three lemmas. The proofs are essentially the same as in~\cite{E23} (with some simplifications as~\cite{E23} compares the algorithm's output to an optimum {\it fractional} solution) and are included for completeness.
\begin{lemma} \label{l1} For all $t=0,1,\ldots$, it holds that
	\begin{align}
	w_t(H_t) \leq w_0(H) \exp\left( -\frac{\eps(1-\nu)t}{\OPT}\right).
	\end{align}
\end{lemma}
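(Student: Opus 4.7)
The plan is to establish, by induction on $t\ge 0$, the per-step geometric decrease
\[
w_{t+1}(H_{t+1}) \;\le\; w_t(H_t)\cdot\exp\!\left(-\frac{\eps(1-\nu)}{\OPT}\right),
\]
which, telescoped from the trivial base case $w_0(H_0)=w_0(H)$, yields the stated bound. The inductive step splits into two essentially independent ingredients.

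For the first ingredient, I would use the monotonicity $H_{t+1}\subseteq H_t$, which follows from $P_t\subseteq P_{t+1}$ via the definition \raf{Ht}: adding a guard can only increase $|P_\cdot\cap\Vis(q)|$, so once a point leaves the active set it never returns. Hence $w_{t+1}(H_{t+1})\le w_{t+1}(H_t)$. Next, line~\ref{s-wt-red} multiplies $w_t$ by $(1-\eps)$ on $H_t\cap\Vis(p_{t+1})$ and leaves it unchanged elsewhere on $H_t$. Because $w_t$ is piecewise constant on the cells of $\cells_t(H)$ (Observation~\ref{ob1}), $w_t(\cdot)$ is an additive set function (the area-integral of a piecewise constant density), and therefore
\[
w_{t+1}(H_t) \;=\; w_t(H_t)\;-\;\eps\cdot w_t\!\left(\Vis_{H_t}(p_{t+1})\right).
\]

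For the second ingredient, I would lower-bound $w_t(\Vis_{H_t}(p_{t+1}))/w_t(H_t)$ by $(1-\nu)/\OPT$. Let $P^*\subseteq H$ be an optimal guard set with $|P^*|=\OPT$. Every $q\in H_t\subseteq H$ is seen by at least one $p^*\in P^*$, so summing indicators and integrating against $w_t$ gives
\[
\sum_{p^*\in P^*} w_t\!\left(\Vis_{H_t}(p^*)\right)\;\ge\; w_t(H_t).
\]
Consequently some $p^*\in P^*$ satisfies $w_t(\Vis_{H_t}(p^*))\ge w_t(H_t)/\OPT$, and in particular $\max_{q\in H} w_t(\Vis_{H_t}(q))\ge w_t(H_t)/\OPT$. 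The $(1-\nu)$-guarantee of the oracle $\MO(H,H_t,w_t,\nu)$ then upgrades this to $w_t(\Vis_{H_t}(p_{t+1}))\;\ge\;(1-\nu)\cdot w_t(H_t)/\OPT$.

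Combining the two ingredients and applying $1-x\le e^{-x}$ yields the per-step decrease displayed above, and iterating from $t=0$ gives the claimed bound. The proof follows the classical MWU template, so no serious obstacle is expected; the most delicate point is purely notational, namely the passage from the pointwise definition of $w_t$ to the additive set function $w_t(\cdot)$ on measurable subregions of $H$. Observation~\ref{ob1} is exactly the tool that legitimates this passage, after which the linear identity for $w_{t+1}(H_t)$ and the OPT-comparison are immediate.
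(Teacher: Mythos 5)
Your proposal is correct and follows essentially the same argument as the paper: the same per-step inequality $w_{t+1}(H_{t+1})\le w_{t+1}(H_t)=w_t(H_t)-\eps\, w_t(\Vis_{H_t}(p_{t+1}))$, the same averaging over an optimal guard set $P^*$ to find a point seeing weight at least $w_t(H_t)/\OPT$, the same use of the oracle's $(1-\nu)$-guarantee, and the same application of $1-x\le e^{-x}$ followed by telescoping. The extra remark about $w_t(\cdot)$ being an additive set function (via Observation~\ref{ob1}) is a harmless formalization of a step the paper treats as implicit.
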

\begin{proof}
	We first prove that
		\begin{align}\label{e1-1}
	w_{t+1}(H_{t+1})<\exp\left(-\frac{\eps(1-\nu)}{\OPT}\right)w_{t}(H_t).
	\end{align}
	Consider an optimal solution $P^*\subseteq H$. Then (as $P^*$ sees all points in $H\supseteq H_t$), 
	\begin{align}\label{e1-1-1-1}
	w_t(H_t)&=w_t\left(\bigcup_{q\in P^*}\Vis_{H_t}(q)\right)\le \sum_{q\in P^*}w_t(\Vis_{H_t}(q)).
	\end{align}
	From \raf{e1-1-1-1} it follows that there is a point $q\in P^*$ such that  $w_t(\Vis_{H_t}(q))\ge\frac{w_t(H_t)}{\OPT},$ 
	and thus by the choice of $p_{t+1}$, $w_t(\Vis_{H_t}(p_{t+1}))\ge(1-\nu)w_t(\Vis_{H_t}(q))\ge\frac{(1-\nu)w_t(H_t)}{\OPT}$. Consequently, as $H_{t+1}\subseteq H_t$, we get
	\begin{align*}
	w_{t+1}(H_{t+1})&\le w_{t+1}(H_{t})=w_t(H_t)-\eps \cdot w_t(\Vis_{H_t}(p_{t+1}))\\
	&\le w_t(H_t)-\eps\cdot\frac{(1-\nu)w_t(H_t)}{\OPT}=\left(1-\frac{\eps(1-\nu)}{\OPT}\right)w_t(H_t)\\&<\exp\left(-\frac{\eps(1-\nu)}{\OPT}\right)w_t(H_t).
	\end{align*}
	This gives~\raf{e1-1}, repeated applications of which yields the claim.
\end{proof} 

Define $t_{\max}:=\frac{\OPT}{\eps(1-\nu)}\left(T\ln\frac{1}{1-\eps}+\ln\frac{1}{\delta}\right)$.

\begin{lemma}\label{l-bd1} After at most $t_{\max}$ iterations, we have $w_0(H_{t_{\max}})<\delta\cdot w_0(H)$.
\end{lemma}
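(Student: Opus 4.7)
My plan is to combine the two-sided estimates on $w_t(H_t)$ that are already available: Observation~\ref{ob1} gives a lower bound of $w_t(H_t) > (1-\eps)^T w_0(H_t)$ (since every cell contributing to $H_t$ has been hit at most $T-1$ times), while Lemma~\ref{l1} gives the multiplicative-decrease upper bound $w_t(H_t) \le w_0(H)\exp\bigl(-\eps(1-\nu)t/\OPT\bigr)$. The key point is that $w_0$ is just a counting measure (it equals $1$ on every point of $H$), so $w_0(H_t) = \area(H_t)$ up to scaling and $w_0(H) = \area(H)$; chaining the two inequalities therefore yields a direct bound on $w_0(H_{t})$ in terms of $w_0(H)$.

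Concretely, I would write
\[
(1-\eps)^T w_0(H_t) \;<\; w_t(H_t) \;\le\; w_0(H)\exp\!\left(-\frac{\eps(1-\nu)t}{\OPT}\right),
\]
and divide through by $(1-\eps)^T$ to isolate $w_0(H_t)$:
\[
w_0(H_t) \;<\; w_0(H)\cdot\frac{1}{(1-\eps)^T}\exp\!\left(-\frac{\eps(1-\nu)t}{\OPT}\right).
\]
Substituting $t = t_{\max} = \frac{\OPT}{\eps(1-\nu)}\bigl(T\ln\tfrac{1}{1-\eps}+\ln\tfrac{1}{\delta}\bigr)$ makes the exponential factor equal to $(1-\eps)^T\cdot\delta$, so the $(1-\eps)^{-T}$ in front cancels exactly and leaves $w_0(H_{t_{\max}}) < \delta\cdot w_0(H)$, as desired.

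There is no real obstacle here; this is a one-line algebraic combination of Observation~\ref{ob1} and Lemma~\ref{l1}, together with a direct substitution of the definition of $t_{\max}$. The only thing to be careful about is that the lower bound in Observation~\ref{ob1} is a strict inequality (because $H_t$ only contains points hit strictly fewer than $T$ times), which is what allows us to conclude the strict inequality $w_0(H_{t_{\max}})<\delta\cdot w_0(H)$ rather than a non-strict one at exactly the boundary value of $t$.
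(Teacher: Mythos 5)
Your proof is correct and is essentially the paper's argument: the paper combines the same two facts (the lower bound $w_t(H_t)>(1-\eps)^Tw_0(H_t)$ from Observation~\ref{ob1} and the upper bound from Lemma~\ref{l1}) but phrases the combination as a proof by contradiction, assuming $w_0(H_{t_{\max}})\ge\delta\cdot w_0(H)$ and deriving $t_{\max}<t_{\max}$, whereas you chain the inequalities directly and substitute $t=t_{\max}$. The two are logically identical, and your observation that the strictness comes from Observation~\ref{ob1} is the right bookkeeping.
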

\begin{proof}
	Lemma~\ref{l1} implies that, for $t:=t_{\max}$, we have
	\begin{align*}
	w_t(H_t)<e^{-\frac{\eps(1-\nu)t}{\OPT}}w_{0}(H).
	\end{align*}
	By Observation~\ref{ob1}, $w_t(H_t)>(1-\eps)^Tw_0(H_t)$. Thus, if $w_0(H_t)\ge\delta\cdot w_0(H)$, we get
	\begin{align*}
	(1-\eps)^T\delta<e^{-\frac{\eps(1-\nu)t}{\OPT}},
	\end{align*}
	giving $t<\frac{\OPT}{\eps(1-\nu)}\left(T\ln\frac{1}{1-\eps}+\ln\frac{1}{\delta}\right)=t_{\max}$, in contradiction to $t=t_{\max}$.
\end{proof}

\begin{lemma}\label{l-bd2} Using $T=\frac1{\eps^2}\ln\frac{1}{\delta}$ and $\eps \leq 0.68$, the while-loop in Algorithm~\ref{alg} terminates in at most $t_{\max}$ iterations with a $(\frac{1+2\eps}{1-\nu},1-\delta)$-approximate fractional solution $(\widehat P,\widehat\mu)$ for $H$.
\end{lemma}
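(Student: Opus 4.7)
The plan is to combine Lemma~\ref{l-bd1} (iteration bound) with a direct verification that the output pair $(\widehat P,\widehat\mu)$ satisfies the two conditions in the definition of an $(\alpha,1-\delta)$-approximate fractional solution.

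First, I would dispatch termination. Since the initial weights are $w_0\equiv 1$, we have $w_0(H_t)=\area(H_t)$ for every $t$, so Lemma~\ref{l-bd1} gives $\area(H_{t_{\max}})<\delta\cdot\area(H)$, which is exactly the negation of the while-loop condition. Hence the loop exits at some iteration $t^*\le t_{\max}$, with $\widehat P = P_{t^*}$.

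Second, I would establish feasibility of $(\widehat P,\widehat\mu)$. Set $H':=H\setminus H_{t^*}$. By the definition (\ref{Ht}) of $H_{t^*}$, every $q\in H'$ is seen by at least $T$ points of $P_{t^*}$, so $\widehat\mu(\Vis(q))=|P_{t^*}\cap \Vis(q)|/T\ge 1$, and the termination of the loop guarantees $\area(H')\ge(1-\delta)\area(H)$.

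Third, I would prove the approximation bound $\widehat\mu(\widehat P)\le\frac{1+2\eps}{1-\nu}\OPT$. Using $\widehat\mu\equiv 1/T$, $t^*\le t_{\max}$, and the closed form for $t_{\max}$, a direct substitution with $T=\eps^{-2}\ln(1/\delta)$ yields
\begin{equation*}
\widehat\mu(\widehat P)=\frac{t^*}{T}\le\frac{t_{\max}}{T}=\frac{\OPT}{1-\nu}\left(\frac{1}{\eps}\ln\frac{1}{1-\eps}+\eps\right).
\end{equation*}
So it suffices to show $\frac{1}{\eps}\ln\frac{1}{1-\eps}\le 1+\eps$ for $\eps\le 0.68$, equivalently $f(\eps):=\eps+\eps^2+\ln(1-\eps)\ge 0$. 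This last elementary inequality is the main technical step: I would note $f(0)=0$ and then verify positivity on $(0,0.68]$ (e.g., by checking that the unique critical point in this range, determined by $f'(\eps)=1+2\eps-\frac{1}{1-\eps}=0$, is an interior maximum, and confirming numerically that $f(0.68)>0$ while $f(0.7)<0$, explaining the appearance of the threshold $0.68$). This calculus fact is the only nontrivial piece, while the rest of the argument is a bookkeeping assembly of the definitions together with Lemma~\ref{l-bd1}.
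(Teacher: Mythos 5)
Your proposal is correct and follows essentially the same route as the paper: termination via Lemma~\ref{l-bd1} (using $w_0\equiv 1$ so that $w_0(H_t)=\area(H_t)$), feasibility on $H':=H\setminus H_{t^*}$ from the definition~\raf{Ht} of $H_t$, and the bound $\widehat\mu(\widehat P)\le\frac{t_{\max}}{T}$ reduced to the elementary inequality $\frac{1}{\eps}\ln\frac{1}{1-\eps}\le 1+\eps$ for $\eps\le 0.68$. The only difference is that you justify this last calculus fact explicitly (correctly locating the interior critical point at $\eps=\tfrac12$ and checking the endpoint $0.68$), whereas the paper simply asserts it.
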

\begin{proof}
	By Lemma~\ref{l-bd1},  the while-loop in Algorithm \ref{alg} terminates in iteration $t_f \le t_{\max}$.
	We define the fractional solution $(\widehat P,\widehat\mu)$ of support $\widehat P=P_{t_f}$ and with $\widehat\mu(p)=\frac1T$ for $p\in \widehat P$.
	\medskip
	
	\noindent{\it $(1-\delta)$-Feasibility:} By the stopping criterion, $w_0(H_{t_f})<\delta\cdot w_0(H)$. Moreover, for any $q\in H\setminus H_{t_f}$, we have
	$\widehat\mu(\Vis(q))=\sum_{p\in \widehat P\cap \Vis(q)}\widehat\mu(p)=\frac{1}{T}|\widehat P\cap \Vis(q)|\ge 1$, since $|\widehat P\cap \Vis(q)|\ge T$, for all $q\in H\setminus H_t$.
	
	\medskip
	
	\noindent{\it Quality of the fractional guarding set $\widehat \mu$:}
	Using $T=\frac1{\eps^2}\ln\frac{1}{\delta}$, we have
	\begin{align*}
	\frac{\widehat\mu(\widehat P)}{\OPT}&=\frac{|\widehat P|}{T\cdot \OPT}=\frac{t_f}{T\cdot z^*_\cF}\le\frac{t_{\max}}{T\cdot \OPT}=\frac{1}{\eps(1-\nu)}\left(\ln\frac{1}{1-\eps}+\frac1T\ln\frac{1}{\delta}\right)\\
	&\le\frac{1}{1-\nu}\left(\frac{1}{\eps}\cdot\ln\frac{1}{1-\eps}+\eps\right)<\frac{1+2\eps}{1-\nu},	
	\end{align*}
	for $\eps \leq 0.68$. 
\end{proof}

    Note that the running time is {\it output-sensitive} as it depends on the value of $\OPT\le n$.

    \medskip
    
\noindent{\it Proof of Theorem~\ref{t1}.}
Given the $(\frac{1+2\eps}{1-\nu},1-\delta)$-approximate fractional solution $(\widehat P,\widehat\mu)$ for $H$, obtained at the end of the while-loop in Algorithm~\ref{alg} (as guaranteed by Lemma~\ref{l-bd2}), we use a similar argument as in \cite{ERS05} for rounding the fractional solution. Let $\epsilon:=\frac1{\widehat\mu(\widehat P)}$. By definition of $\widehat \mu$, we have $\widehat\mu(\widehat P)\le\frac{1+2\eps}{1-\nu}\cdot\OPT$, and $\widehat\mu(\Vis(q))\ge 1=\epsilon\cdot\widehat\mu(\widehat P)$ for all $q\in H'$, where $H'\subseteq H$ has $\area(H')\ge(1-\delta)\area(H)$. If $\epsilon=1$ than $1=\widehat\mu(\widehat P)\ge\widehat \mu(\Vis(q))\ge 1$, for all $q\in H'$, implies that $\widehat P\subseteq\Vis(q)$ for all $q\in H'$, and thus taking any point in $\widehat P$ gives a solution satisfying the claim in the theorem. Thus we may assume that $\epsilon<1$.
Consider the range space $\cF_{H'}=(H,\{\Vis(q)~:~q\in H'\})$ defined by the visibility polygons of points in $H'$ (w.r.t. $H$), together with the weight function $\widehat\mu:\widehat P\to\RR_+$. According to the results stated in Section~\ref{sec:VCd}, $\VCd(\cF_{H'})\le\VCd(\cF_H)=O(\log (h+2))$ and an $\epsilon$-net $P\subseteq \widehat P$ of size 
\begin{align*}
|P|&=s_{\cF_H}\left(\frac{1}{\epsilon}\right)=O(\widehat\mu(\widehat P)\cdot\log(h+2)\cdot\log(\widehat\mu(\widehat P)\cdot\log( h+2)))\\&=O\left(\frac{1+2\eps}{1-\nu}\cdot\log(h+2)\cdot\log\Big(\frac{1+2\eps}{1-\nu}\cdot\OPT\cdot\log(h+2)\Big)\right)\cdot\OPT
\end{align*} 
can be computed deterministically in time $\poly(n,h^{\log\log (h\cdot|\widehat P|)},|\widehat P|)$.  The definition of $P$ implies that $P\cap \Vis(q)\neq\emptyset$ for all $q\in H'$. The theorem follows.  
\qed

\section{A deterministic Maximization Oracle} \label{sec:det}

In this section, we show how the oracle \MO$(H,H',w',\nu)$ can be implemented in polynomial time, for a given a polygon $H$  and an accuracy parameter $\nu\in[0,1)$, when $H'=H_t\subseteq H$ is the active polygon in iteration $t$ of Algorithm~\ref{alg}, and  $w'=w_t:H\to\RR_+$ is the weight function in iteration $t$.
We assume that the components of the vertices of $H$ have rational representation, with maximum bit-length $L$ for each component.  

\subsection{Oracle Implementation}
In a given iteration $t$ of Algorithm~\ref{alg}, we are given an active polygon $H_t\subseteq H$, with $\area(H_t)\ge\delta\cdot\area(H)$, determined by the current set of chosen points $P_t\subseteq H$, and the current weight function $w_t:H\to\RR_+$, given by $ w_t(p)=(1-\eps)^{|P_t\cap \Vis(p)|}$, for $p\in H$. Let $\cR_t$ be partition of $H$ defined by the set $\{\Vis(p):~p\in P_t\}$ of visibility polygons of points in $P_t$. As explained in Section~\ref{sec:algorithm}, the set of (convex) cells induced by $\cR_t$ over $H$ has complexity $r_t=O(n(h+1)|P_t|^2)$ and can be computed in time $O(n(h+1)|P_t|^2\log(n(h+1)|P_t|))$; as before, let us call this set $\cells_t(H)$, and for any $R\in\cells_t(H)$, define $P_t(R)=\{p\in P_t:~p\sees q\text{ for all } q\in R\}$ (recall that, for all $R\in\cells_t(H)$, all points in $R$ are equivalent w.r.t. visibility from $P_t$). Noting that $H_t=\bigcup_{R\in\cells_t(H):~|P_t(R)|<T}R$, we can write $\xi_t(q):=w_t(\Vis_{H_t}(q))$ for any $q\in H$ as
\begin{equation}\label{vis-wt}                                                                                                    
\xi_t(q)=\sum_{R\in \cells_t(H):~|P_t(R)|<T}(1-\eps)^{|P_t(R)|}\area(\Vis(q)\cap R).
\end{equation}
Now, to find a point $q$ in $H$ maximizing $\xi_t(q)$, we follow the approach in\footnote{Note that \cite{NT94}  only considers the case when $w_t\equiv 1$.} 
\cite{NT94} in expressing $\xi_t(q)$ as a (non-linear) continuous function of two variables, namely, the $x$ and $y$-coordinates of $q$. To do this, we first construct a refined partition of $H$ into a set of disjoint convex cells, induced by the arrangement of lines formed by the union $U_t$ of  $V(H)$ and the set of (internal) vertices of $\cells_t(H)\setminus V(H)$; see Fig.~\ref{fig5}. Let us denote this partition, which is a refinement of $\cells_t(H)$, by $\cells_t'(H)$ and note that it has complexity $O(r_t^2)=O(n^2(h+1)^2|P_t|^4)$. By definition, for any convex cell $Q$ in $\cells_t'(H)$, any two points in $Q$ are equivalent w.r.t. the visibility of points from $U_t$, that is, if $q,q'\in Q$, then $\cV_{U_t}(q)=\cV_{U_t}(q')$; see Fig.~\ref{fig6}. Moreover, for any pair of distinct vertices $p,p'$ of a cell $R\in\cells_t(H)$, any two points in $Q$ lie on the same side of the line through $p$ and $p'$ (otherwise, this line would have split $Q$ into at two different cells). This implies that, for any point $q=(x,y)\in Q$, the set $\Vis(q)\cap R$
can be decomposed into at most $|E|=O(r_t)$ polygons that are either convex quadrilaterals or triangles, where $E$ is the set of edges of $R$; see Fig.~\ref{fig67} for an illustration. Consider one such polygon $Z\subseteq R$, and assume without loss of generality (w.l.o.g.) that $Z$ is a quadrilateral. Using the notation in Fig.~\ref{fig7}, we can write the vertices of the quadrilateral $Z$ in counterclockwise order as $q_i=(\frac{a_i(x,y)}{b_i(x,y)},\frac{c_i(x,y)}{d_i(x,y)})$, for $i=1,\ldots,4$, where $a_i(x,y),b_i(x,y), c_i(x,y)$, and $d_i(x,y)$ are affine functions of the form $Ax+By+C$, for some constants $A,B,C\in\QQ$ which are multi-linear of degree at most $3$ in the components of some of the vertices of $R$ and $H$. By the {\it Shoelace formula}, we can further write the area of $Z$ as   
\begin{equation}\label{areaZ}
\area(Z)=\frac{1}{2}\sum_{i=1}^4\frac{a_i(x,y)}{b_i(x,y)}\left(\frac{c_{i+1}(x,y)}{d_{i+1}(x,y)}-\frac{c_{i-1}(x,y)}{d_{i-1}(x,y)}\right),
\end{equation}
where indices ($i+1$ and $i-1$) wrap-around from $1$ to $4$. By considering a triangulation of $Q_i$, and letting $\Delta$ be the triangle containing $q\in Q_i$, we can write $q$ in the parametric form $q=(x,y)=\lambda_1(x',y')+\lambda_2(x'',y'')+(1-\lambda_1-\lambda_2)(x''',y''')$, where $(x',y'),(x',y')$ and $(x''',y''')$ are the vertices of $\Delta$, and $\lambda_1,\lambda_2\in[0,1]$.

\begin{figure}
	\centering  
		\includegraphics[width=3.5in]{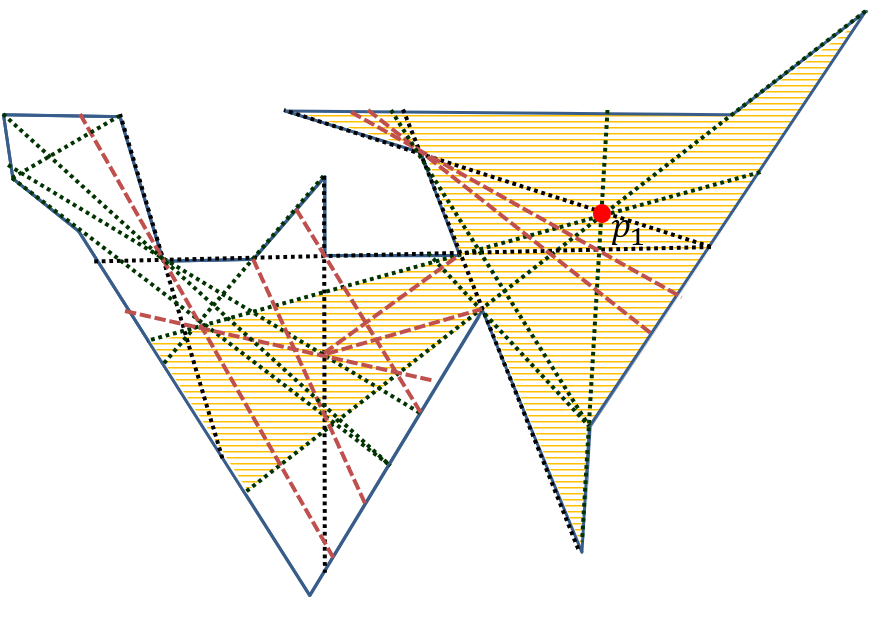}
		\caption{The refined partition $\cells'_1(H)$ obtained from $\cells_1(H)$ by adding all line segments between any pair of internal vertices of $\cells_1(H)$ and between any internal vertex of $\cells_1(H)$ and a vertex of $H$. For clarity of presentation, only some of the added segments are shown (as dashed lines).}                                             
		\label{fig5}
\end{figure}	

\begin{figure}
	\centering  
	\begin{subfigure}{.4\textwidth}
		\includegraphics[width=2.5in]{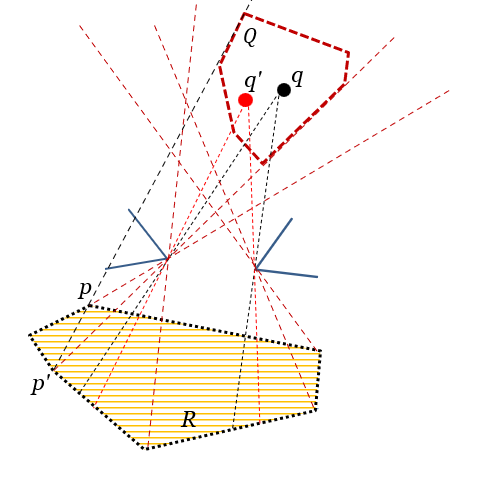}
		\caption{A cell $R\in\cells_t(H)$ and a cell $Q$ in the refinement $\cells_t'(H)$. Any two points $q,q'\in Q$ see exactly the same set of points from the set of vertices of $R$. The line through any two distinct vertices $p,p'$ of $R$ does not cut through $Q$. }                                             \label{fig6}
			\vspace{0.65in}
	\end{subfigure}
	\hspace{0.18in}
	\begin{subfigure}{.4\textwidth}
		\includegraphics[width=2.5in]{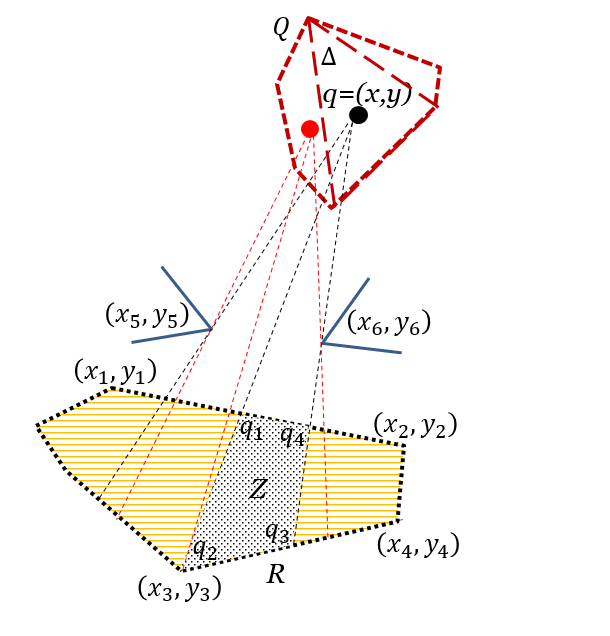}
		\caption{The visibility polygon of $q \in\Delta\subseteq Q$ inside $R$ can be decomposed into two convex quadrilaterals; one of them is $Z$, which is determined by $q=(x,y)$, the edges $\{(x_1,y_1),(x_2,y_2)\}$ and $\{(x_3,y_3),(x_4,y_4)\}$ of $R$, and the vertex $(x_6,y_6)$ of the polygon $H$.} 
		\vspace{0.54in}
		\label{fig7}
	\end{subfigure}
	\caption{The visibility polygon of a point $q\in Q$ inside $R\in\cells_t(H)$, where $Q\in\cells_t'(H)$.}	\label{fig67}
\end{figure}

It follows from \raf{vis-wt} and \raf{areaZ} that $\xi_t(q)$ can be written as     \begin{equation}\label{vis-wt2}                                                                                             \xi_t(q)=\xi_t(\lambda_1,\lambda_2)=\sum_{i=1}^k(1-\eps)^{j_i}\frac{M_i(\lambda_1,\lambda_2)}{N_i(\lambda_1,\lambda_2)},
\end{equation}     
where
$k=O(|E|)=O(r_t)=\poly(n,h,t)\le\poly(n,h,\log\frac{1}{\delta})$ (as $t\le t_{\max}=\poly(n,h,\log\frac{1}{\delta})$), $j_i\le T=O(\log\frac1\delta)$, and $M_i(\lambda_1,\lambda_2)$ and $N_i(\lambda_1,\lambda_2)$ are quadratic functions of $\lambda_1$ and $\lambda_2$ with coefficients having bit-length $O(L_t)$, where $L_t$ is the maximum bit length needed to represent the components of the vertices in $U_t$. We can maximize $\xi_t(\lambda_1,\lambda_2)$ over $\lambda_1,\lambda_2\in[0,1]$
by considering $9$ cases, corresponding to $\lambda_1\in\{0,1\}$, and $\lambda_1\in(0,1)$; and $\lambda_2\in\{0,1\}$, and $\lambda_2\in(0,1)$, and taking the value that maximizes $\xi_t(\lambda_1,\lambda_2)$ among them. Consider w.l.o.g. the case when $\lambda_1,\lambda_2\in(0,1)$. We can maximize $\xi_t(\lambda_1,\lambda_2)$ by setting the gradient of \raf{vis-wt2} to $0$, which in turn reduces to solving a system of two polynomial equations of degree $O(k)$ in two variables. A rational approximation to the solution $(\lambda_1^*,\lambda_2^*)$ of this system to within an {\it additive} accuracy of $\tau$ can be computed in time and bit complexity $\poly(k,L_t,\log\frac{1}{\tau})$, using, e.g., the {\em quantifier elimination algorithm} of Renegar \cite{R92}; see also Basu et al. \cite{BPR96} and Grigoriev and Vorobjov \cite{GV88}. 
\begin{claim}\label{cl4}
	The function $\xi_t(\lambda_1,\lambda_2)$ in \raf{vis-wt2} is $ 2^{C_0kL_t}$-Lipschitz, for some constant $C_0>0$.\footnote{A continuous differentiable function $f:S\to\RR$ is $\rho$-Lipschitz over $S\subseteq \RR^n$ if $|f(y)-f(x)|\le \rho\|x-y\|_2$ for all $x,y\in S$.}                                                                                                                                                                                                           
\end{claim}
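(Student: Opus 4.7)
The plan is to obtain the Lipschitz bound by controlling $\|\nabla\xi_t(\lambda_1,\lambda_2)\|_2$ uniformly on the convex triangle $\Delta$ (which sits inside $[0,1]^2$ in the $(\lambda_1,\lambda_2)$ coordinates), and then invoking the mean value theorem on this convex domain. Differentiating \raf{vis-wt2} term by term gives
\[
\nabla\xi_t \;=\; \sum_{i=1}^k (1-\eps)^{j_i}\,\frac{N_i\,\nabla M_i - M_i\,\nabla N_i}{N_i^2},
\]
so it suffices to bound each numerator from above and each denominator from below.

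The upper bounds $|M_i|,\,|N_i|,\,\|\nabla M_i\|_2,\,\|\nabla N_i\|_2 \le 2^{O(L_t)}$ on $[0,1]^2$ are routine, since each is a quadratic whose coefficients have magnitude $2^{O(L_t)}$ evaluated on a bounded domain. The hard part will be the matching lower bound $|N_i|\ge 2^{-O(L_t)}$ on $\Delta$. Tracing through \raf{areaZ}, the $i$-th summand in \raf{vis-wt2} has the form $\tfrac{a_\ell c_m}{b_\ell d_m}$, so $N_i = b_\ell(x,y)\,d_m(x,y)$ is a product of two affine forms which, after composing with the affine change of variables $(x,y) = \lambda_1(x',y')+\lambda_2(x'',y'')+(1-\lambda_1-\lambda_2)(x''',y''')$ (whose corner coordinates are vertices of $\cells_t'(H)$ and hence still have bit-length $O(L_t)$), remain affine with coefficients of bit-length $O(L_t)$. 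I would first argue that neither $b_\ell$ nor $d_m$ vanishes on the combinatorial cell $Q\supseteq\Delta$: these affine forms are the denominators in the vertex formulas $q_\ell=(a_\ell/b_\ell,c_\ell/d_\ell)$ of the quadrilateral $Z$, and since the combinatorial type of $\Vis(q)\cap R$ is invariant throughout $Q$, each $q_\ell$ is a finite point for every $q\in Q$, forcing $b_\ell,d_m\neq 0$ there. Being affine and non-vanishing on the convex $\Delta$, each of $b_\ell,d_m$ has constant sign on $\Delta$, so its minimum absolute value is attained at a vertex of $\Delta$. That vertex is a rational point of bit-length $O(L_t)$ from the arrangement $\cells_t'(H)$, so evaluating an $O(L_t)$-bit affine form there produces a nonzero rational of bit-length $O(L_t)$, and hence a value at least $2^{-cL_t}$ for some absolute $c$; multiplying the two factors yields $|N_i|\ge 2^{-O(L_t)}$.

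Combining these estimates, each term of the gradient expansion has norm at most $2^{O(L_t)}\cdot 2^{O(L_t)}/2^{-O(L_t)} = 2^{c_1 L_t}$, and summing the $k$ terms (each damped by $(1-\eps)^{j_i}\le 1$) yields $\|\nabla\xi_t\|_2 \le k\cdot 2^{c_1 L_t} \le 2^{C_0 k L_t}$ for a suitable absolute constant $C_0$, since $\log k + c_1 L_t \le C_0\,k\,L_t$ as soon as $C_0 \ge c_1+1$. The mean value theorem on the convex $\Delta$ then delivers the Lipschitz bound. The main obstacle will be making the lower bound on $|N_i|$ rigorous: carefully unpacking the shoelace factorization to identify $N_i$ as a product of affine forms in $(\lambda_1,\lambda_2)$, and then turning their qualitative non-vanishing on the combinatorial cell into an explicit $2^{-O(L_t)}$ bound via evaluation at the rational vertices of $\Delta$.
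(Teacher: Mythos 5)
Your proposal follows essentially the same route as the paper's proof: bound $\|\nabla\xi_t\|_2$ by upper-bounding the numerator polynomials and lower-bounding the denominators, the latter by observing that each denominator is a product of affine forms in $(\lambda_1,\lambda_2)$ that are non-vanishing (hence of constant sign) on the convex domain, so their minimum absolute value is attained at a rational corner of bit-length $O(L_t)$ and is therefore at least $2^{-O(L_t)}$; the mean value theorem then yields the Lipschitz constant. The only immaterial difference is that you differentiate term by term and sum $k$ quotients, whereas the paper places the gradient over a common denominator that is a product of $k$ factors of the form $b^2d^2$ and bounds that product below by $2^{-O(kL_t)}$ --- both give the claimed $2^{O(kL_t)}$ bound, and your explicit argument for why the affine forms cannot vanish on the combinatorial cell makes precise a point the paper merely asserts.
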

\begin{proof}              
	It is enough to show that $\|\nabla\xi_t(\lambda_1,\lambda_2)\|_2\le 2^{O(kL_t)}.$ By \raf{vis-wt2}, each component of $\nabla\xi_t(\lambda_1,\lambda_2)$ is of the form $\frac{M(\lambda_1,\lambda_2)}{N(\lambda_1,\lambda_2)}$, where $M(\cdot,\cdot)$ and $N(\cdot,\cdot)$ are polynomials in $\lambda_1,\lambda_2\in[0,1]$ of degree $O(k)$ and coefficients of maximum bit length $O(kL_t)$. Thus $|M(\lambda_1,\lambda_2)|\le 2^{O(kL_t)}$. Also, from \raf{areaZ} and \raf{vis-wt2}, $N(\lambda_1,\lambda_2)$ can be written as a product of $k$ factors of the form $b(\lambda_1,\lambda_2)^2d(\lambda_1,\lambda_2)^2$, where $b(\cdot,\cdot)$ and $d(\cdot,\cdot)$ can be assumed to be {\it non-zero} affine functions of $\lambda_1$ and $\lambda_2$. Suppose $b(\lambda_1,\lambda_2)=A_1\lambda_1+A_2\lambda_2+A_0$, for some constants $A_0,A_1,A_2\in\QQ$ which have bit length $O(L_t)$. Since the minimum of $b(\lambda_1,\lambda_2)$ over $\lambda_1,\lambda_2\in[0,1]$ is attained at some $\lambda_1,\lambda_2\in\{0,1\}$, it follows that $|b(\lambda_1,\lambda_2)|\ge\min\{|A_0|,|A_1+A_0|,|A_2+A_0|,|A_1+A_2+A_0|\}\ge\frac{1}{2^{O(L_t)}}$. A similar observation can be made for $d(\cdot,\cdot)$ and implies that $N(\lambda_1,\lambda_2)\ge\frac{1}{2^{O(kL_t)}}$, which in turn implies the claim.       	
\end{proof}                                             
Let $q^*_\Delta\in\argmax_{q\in\Delta}\xi_t(q)$. If $\lambda^*:=(\lambda_1^*,\lambda_2^*)$ and  $\lambda:=(\lambda_1,\lambda_2)$ are the parametric descriptions of $q^*_\Delta$ and $q\in\Delta$, respectively, then one can easily verify that $\|q^*-q\|_2\le C_12^{L_t}\|\lambda^*-\lambda\|_2$ for some constant $C_1>0$.
Thus, by Claim~\ref{cl4}, we can choose the accuracy $\tau=\theta\cdot \frac{\sqrt{2}}{C_1}2^{-(C_0k +1)L_t}$ sufficiently small, to get a point $q_\Delta\in\Delta$ such that $\xi_t(q_\Delta)\ge \xi_t(q^*_\Delta)- \theta$, where  $\theta:=\frac{\nu\cdot w_t(H_t)}{n}\ge\frac{\nu \delta (1-\eps)^T\area(H)}{n}$ (by Observation~\ref{ob1} and the fact that $w_0(H_t)\ge \delta\cdot w_0(H)$ inside the while-loop of Algorithm~\ref{alg}), and hence $\log\frac{1}{\tau}=\poly(k,L_t,\log\frac{1}{\delta})$. Finally, we let $p_{t+1}\in\argmax_{\Delta} \xi_t(q_{\Delta})$, where $\Delta$ ranges over all triangles in the triangulations of the polygons $Q\in\cells_t'(H)$, to get
\begin{align*}
\xi_t(p_{t+1})=\max_{\Delta}\xi_t(q_\Delta)\geq\max_{\Delta}\xi_t(q^*_\Delta)-\theta=\max_{q\in H}\xi_t(q)-\frac{\nu\cdot w_t(H_t)}{n}\ge(1-\nu)\max_{q\in Q}\xi_t(q),
\end{align*}                                            
where the last inequality follows from the fact that $\max_{q\in H}\xi_t(q)\ge \frac{w_t(H_t)}{n}$.

\subsection{Rounding}
A technical hurdle in the above implementation of the maximization oracle is that the required bit length $L_t$ may grow from one iteration to the next (since the approximate maximizer $p_{t+1}$ above has bit length $\poly(n,h,L_t,\log\frac{1}{\delta})$), resulting in an exponential blow-up in the bit length needed for the entire computation. Indeed, given $p_{t+1}$, the new arrangement $\cells_{t+1}(H)$ is constructed by adding line segments connecting $p_{t+1}$ to some vertices in $H$, and hence some internal vertex in $\cells_{t+1}(H)$ may have bit length of $L_{t+1}=\poly(n,h,L_t,\log\frac{1}{\delta})$.

To deal with this issue, we need to round the polygonal arrangement $\cells_t(H)$ in each iteration (of the while-loop) so that the total bit length in all iterations remains bounded by a polynomial in the input size.\footnote{This is somewhat similar to the rounding step typically applied in numerical analysis to ensure that the intermediate numbers used during the  computation have finite precision.} This can be done as follows (see Fig.~\ref{fig89}). Recall that $\cells_t(H)$ consists of $r_t:=Cn(h+1)|P_t|^2$ disjoint convex polygons, for some constant $C>0$. Let $\bar t=\frac{n}{\eps(1-\nu)}\left(T\ln\frac{1}{1-\eps}+\ln\frac{1}{\delta}\right)$ be the upper bound on the number of iterations of the algorithm, and set 
$r_{\max}:= Cn(h+1)\bar t^2\ge \max_tr_t$. We consider an infinite grid $\Gamma$ in the plane of cell size $\rho=2^\ell$, where $\ell\in\ZZ$ is such that $\rho\le\frac{\nu\delta(1-\varepsilon)^T\area(H)}{16(1-\nu/2) D n r_{\max}}\le 2\rho$, and $D$ is the diameter of $H$ (which has bit length bounded by $O(L)$).

Let us call a cell $R\in\cells_t(H)$ {\it large} if $\area(R)\ge4\rho D$, and {\it small} otherwise. 
\begin{claim}\label{cl5}
	Every large cell  $R\in\cells_t(H)$ contains at least one grid point.
\end{claim}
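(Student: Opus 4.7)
The plan is to proceed by contradiction, using the convexity of $R$ together with the area lower bound to force a ``thick'' vertical cross-section at some grid $x$-coordinate. Suppose $R$ contains no point of $\Gamma=\rho\ZZ^2$. Since $R\subseteq H$ has diameter at most $D$, its projection onto the $x$-axis is an interval $[x_1,x_2]$ of length $w_x\le D$, and similarly its $y$-extent is $w_y\le D$; from $\area(R)\le w_x\cdot w_y$ one immediately obtains $w_x\ge\area(R)/D\ge 4\rho$ (and symmetrically $w_y\ge 4\rho$).

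Define the vertical chord-length function $\ell(x):=\text{length of }R\cap(\{x\}\times\RR)$ on $[x_1,x_2]$. Because $R$ is a convex polygon, $\ell$ is the difference of a concave function (the upper boundary) and a convex function (the lower boundary), hence concave. The area identity
\[
\area(R)=\int_{x_1}^{x_2}\ell(x)\,dx\le M\cdot w_x,\qquad M:=\max_{x\in[x_1,x_2]}\ell(x),
\]
yields $M\ge \area(R)/w_x\ge 4\rho$. The crucial observation under the contradiction hypothesis is that for every integer $k$ with $k\rho\in[x_1,x_2]$, the vertical chord at $x=k\rho$ must have length strictly less than $\rho$: otherwise this interval of $y$-values would contain some multiple $m\rho$, producing a grid point $(k\rho,m\rho)\in R$.

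To close the argument, I would let $x^*$ be a maximizer of $\ell$ and use $w_x\ge 4\rho$ to ensure that at least one of $x^*-x_1$ or $x_2-x^*$ is at least $2\rho$; say the former. Then the length-$\rho$ interval $[x^*-\rho,x^*]$ lies inside $(x_1,x^*]$ and hence contains some grid $x$-coordinate $k\rho$. Applying concavity of $\ell$ on $[x_1,x^*]$ with $\ell(x_1)\ge 0$ and $\ell(x^*)=M$ gives
\[
\ell(k\rho)\ \ge\ M\cdot\frac{k\rho-x_1}{x^*-x_1}\ \ge\ M\cdot\Bigl(1-\frac{\rho}{x^*-x_1}\Bigr)\ \ge\ \frac{M}{2}\ \ge\ 2\rho,
\]
contradicting $\ell(k\rho)<\rho$. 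I do not expect any substantial obstacle; the only minor technicality is the degenerate case where the maximizer $x^*$ lies at a boundary point of $[x_1,x_2]$, but in that event the same concavity estimate applied to the opposite half of the support (still of length at least $4\rho-\rho=3\rho$) produces a grid $x$-coordinate at which $\ell$ exceeds $\rho$, yielding the same contradiction.
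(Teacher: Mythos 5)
Your proof is correct, but it takes a different route from the paper's. The paper argues that a convex cell containing no point of $\Gamma$ must be covered by the union of a single row and a single column of the grid, each of area at most $\rho D$ inside $H$, giving $\area(R)\le 2\rho D<4\rho D$. You instead work with the vertical chord-length function $\ell(x)$, use its concavity over the $x$-projection of $R$ together with $w_x\ge\area(R)/D\ge 4\rho$ and $M\ge\area(R)/w_x\ge 4\rho$, and show that some vertical grid line must carry a chord of length at least $M/2\ge 2\rho>\rho$, which forces a grid point on it. Your version is in fact the more careful one: the paper's covering claim is not literally true for all grid-point-free convex sets (a thin diagonal sliver can meet many rows and many columns while avoiding $\Gamma$), whereas your concavity argument handles such cells without any case distinction and still delivers the same area threshold $4\rho D$. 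The paper's argument is shorter and its conclusion is correct, but your write-up supplies the quantitative step that the one-line convexity assertion glosses over; the only cost is a slightly longer computation and the (harmless) boundary-maximizer case you already address.
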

\begin{proof}
Suppose that $R$ does not contain any point in $\Gamma$. 
By convexity, $R$ cannot overlap with more than one row and one column of $\Gamma$. As the area of any column or row in  $\Gamma$ inside $H$ does not exceed $\rho D$, we get the contradiction
$
2\rho D\ge \area(R)\ge 4\rho D.
$
\end{proof}

Let $\cL_t$ be the set of large cells in iteration $t$ of the algorithm. For each $R\in\cL_t$ we define an approximate polygon $\widetilde R\subseteq R$ as follows: for each vertex $v$ of $R$, we find a point $\widetilde v$ in $\Gamma\cap R$, closest to it (which is guaranteed to exist by Claim~\ref{cl5}), then define $\widetilde R:=\conv\{\widetilde v:~v \text{ is a vertex of $R$}\}$. Finally, we define the rounded polygonal arrangement $\widetilde{\cells}_t(H):=\{\widetilde R~:R\in\cells_t(H)\}$ and the rounded active polygon $\widetilde H_t:=\bigcup_{R\in\cL_t:~P_t(R)<T}\widetilde R$; see Fig.~\ref{fig8}.  

For $q\in H$, let us define the weight of the rounded visibility region $\widetilde\xi_t(q):=w_t(\Vis_{\widetilde H_t}(q))$:
\begin{equation}\label{vis-wt3}                                                                                                 
\widetilde\xi_t(q)=\sum_{R\in \cL_t:~|P_t(R)|<T}(1-\eps)^{|P_t(R)|}\area(\Vis(q)\cap\widetilde R),
\end{equation}
and note, by comparison to~\raf{vis-wt}, that 
\begin{align}\label{e3}
\xi_t(q)&=\widetilde\xi_t(q)+\sum_{R\in\cells_t(H)\setminus\cL_t:~|P_t(R)|<T}(1-\eps)^{|P_t(R)|}w_t(V(q)\cap R)\nonumber\nonumber\\
&\qquad~\quad+\sum_{R\in\cL_t:~|P_t(R)|<T}(1-\eps)^{|P_t(R)|}\area(\Vis(q)\cap (R\setminus\widetilde R))\nonumber\\
&\le\widetilde\xi_t(q)+\sum_{R\in\cL_t:~|P_t(R)|<T}\area(\Vis(q)\cap (R\setminus\widetilde R))\nonumber\\&=\widetilde\xi_t(q)+\area\Big(V(q)\cap\bigcup_{R\in\cL_t:~|P_t(R)|<T}(R\setminus\widetilde R)\Big)\le\widetilde\xi_t(q)+\area\Big(\Vis(q)\cap(H_t\setminus\widetilde H_t)\Big).
\end{align}
The only change we need in Algorithm~\ref{alg} is to replace $H_t$ in  by $\widetilde{H}_t$ and $\nu$ by $\frac\nu{2}$, when applying the maximization oracle in iteration $t$, that is, to replace line~\ref{s-oracle} of the algorithm by $$p_{t+1}\gets\MO(H,\widetilde H_t,w_t,\frac{\nu}{2}).$$ 
We note that, by~\raf{e3}, the oracle now returns a point $p_{t+1}$ such that
\begin{align}\label{e4}
\xi_t(p_{t+1})\ge\widetilde\xi_t(p_{t+1})\ge\Big(1-\frac\nu2\Big)\max_{q\in H}\widetilde\xi_t(q)\ge\Big(1-\frac\nu2\Big)\cdot\Big(\max_{q\in H}\xi_t(q)-\area(H_t\setminus\widetilde H_t)\Big).
\end{align}
\begin{figure}	
	\centering  
	\begin{subfigure}{.4\textwidth}
		\includegraphics[width=2.5in]{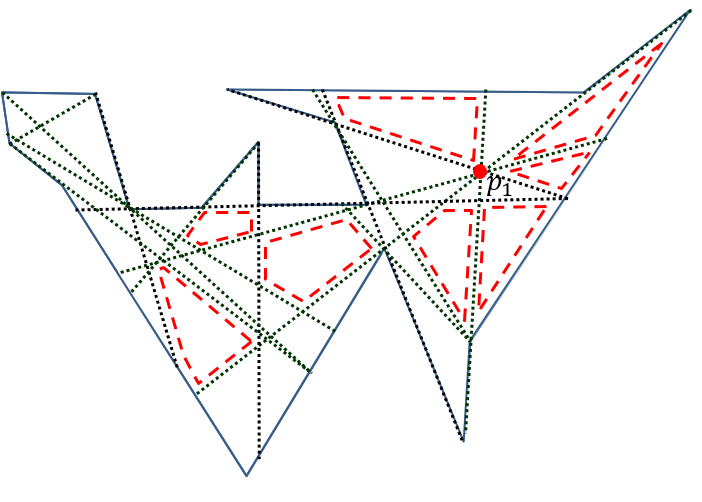}
		\caption{The polygonal arrangement $\cells_1(H)$ and the rounded polygonal arrangement $\widetilde{\cells}_1(H)$. For clarity of presentation, the rounding of only some of the cells in $\cells_1(H)$ is shown (as dashed convex polygons).
		}                                                            \label{fig8}\vspace{.7in}
	\end{subfigure}
	\hspace{0.18in}
	\begin{subfigure}{.4\textwidth}
		\includegraphics[width=2.5in]{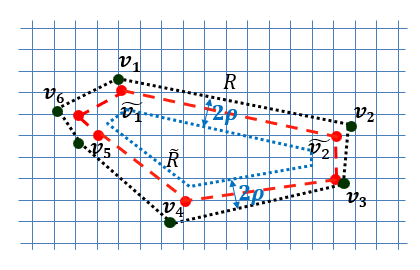}
		\caption{The polygon $R\in\cells_t(H)$ and its internal approximation $\widetilde R$ using the vertices of the grid. Note that, since for all vertices $v$ of $R$ the distance between $v$ and $\widetilde v$ is less than $2\rho$, the difference $R\setminus\widetilde R$ is covered by the region between $\partial R$ and the dotted polygon, which is at distance less than $2\rho$ from $\partial R$.} 
		\label{fig9}
	\end{subfigure}
\caption{Rounding the polygonal arrangement $\cells_t(H)$.}\label{fig89}
\end{figure}
\begin{claim}\label{cl3}                                 
$\area(H_t\setminus\widetilde{H}_t)\le\frac{\nu}{2(1-\nu/2)}\max_{q\in H}\xi_t(q)$.
\end{claim}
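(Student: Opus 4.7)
The plan is to decompose $H_t\setminus\widetilde H_t$ into contributions coming from small cells of $\cells_t(H)$ and from the ``rims'' $R\setminus\widetilde R$ of the large cells, bound each separately by elementary planar geometry, and then combine with an easy lower bound on $\max_{q\in H}\xi_t(q)$ that follows from triangulating $H$. The entire argument reduces to showing $\area(H_t\setminus\widetilde H_t)=O(\rho D\,r_{\max})$ on the one hand, and $\max_q\xi_t(q)=\Omega((1-\eps)^T\delta\area(H)/n)$ on the other, and then plugging in the prescribed value of $\rho$.

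First I would observe that, since $\widetilde H_t=\bigcup_{R\in\cL_t:\,|P_t(R)|<T}\widetilde R$ drops small cells entirely and only shrinks each large cell to $\widetilde R\subseteq R$, every point of $H_t\setminus\widetilde H_t$ either lies (i) in a small cell $R\in\cells_t(H)\setminus\cL_t$ with $|P_t(R)|<T$, or (ii) in the rim $R\setminus\widetilde R$ of a large cell $R\in\cL_t$ with $|P_t(R)|<T$. For (i), by definition each small cell has area strictly less than $4\rho D$, and since $|\cells_t(H)|\le r_t\le r_{\max}$, the total contribution is at most $4\rho D\,r_{\max}$. For (ii), because every vertex of a large $R$ is displaced by less than $2\rho$ when replaced by its nearest point in $\Gamma\cap R$ (as stated in Fig.~\ref{fig9}), the rim $R\setminus\widetilde R$ is contained in the $2\rho$-neighborhood of $\partial R$, so $\area(R\setminus\widetilde R)\le 2\rho\cdot\mathrm{perimeter}(R)$ up to a lower-order term in $\rho$. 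Summing over large cells, each edge of the arrangement $\cells_t(H)$ has length at most $D$ and is shared by at most two cells, hence $\sum_{R\in\cL_t}\mathrm{perimeter}(R)=O(r_tD)$, which bounds the total rim area by $O(\rho D\,r_{\max})$. Combining (i) and (ii) yields $\area(H_t\setminus\widetilde H_t)=O(\rho D\,r_{\max})$.

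Next I would establish the lower bound $\max_{q\in H}\xi_t(q)=\Omega(w_t(H_t)/n)$: triangulate $H$ into $O(n+h)=O(n)$ triangles and pick any interior point $q_\Delta$ of each triangle $\Delta$; since $\Delta$ is convex and contained in $H$, we have $\Delta\subseteq\Vis(q_\Delta)$, so
\[
w_t(H_t)=\sum_\Delta w_t(\Delta\cap H_t)\le\sum_\Delta w_t(\Vis_{H_t}(q_\Delta))\le O(n)\max_{q\in H}\xi_t(q).
\]
Observation~\ref{ob1} gives $w_t(H_t)>(1-\eps)^Tw_0(H_t)$, and the while-loop condition of Algorithm~\ref{alg} ensures $w_0(H_t)=\area(H_t)\ge\delta\area(H)$; together, $\max_{q\in H}\xi_t(q)=\Omega((1-\eps)^T\delta\area(H)/n)$.

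Finally, substituting the prescribed $\rho\le\frac{\nu\delta(1-\eps)^T\area(H)}{16(1-\nu/2)Dnr_{\max}}$ into $\area(H_t\setminus\widetilde H_t)=O(\rho D\,r_{\max})$ yields an upper bound of the form $\frac{\nu}{2(1-\nu/2)}\cdot\frac{(1-\eps)^T\delta\area(H)}{n}$, which by the previous display is at most $\frac{\nu}{2(1-\nu/2)}\max_{q\in H}\xi_t(q)$, as required. The step I expect to be the main obstacle is the displacement bound ``$\|v-\widetilde v\|<2\rho$''; it is plausible because the largeness threshold ``$\area(R)\ge4\rho D$'' in Claim~\ref{cl5} is calibrated precisely so that $\Gamma\cap R$ is dense enough, but a completely rigorous bound on the displacement for every vertex (including those with a very sharp internal angle) appears to require a short separate geometric argument. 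Once this is in hand, only elementary constant-chasing remains to match the factor $1/2$ hidden in $\frac{\nu}{2(1-\nu/2)}$ with the factor $16$ appearing in the denominator of $\rho$.
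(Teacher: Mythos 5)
Your proof follows essentially the same route as the paper: the same decomposition of $H_t\setminus\widetilde H_t$ into small cells (total area at most $4\rho D r_t$) and rims $R\setminus\widetilde R$ of large cells (each bounded by $2\rho\cdot\prem(R)$ via containment in the $2\rho$-neighborhood of $\partial R$), combined with the lower bound $\max_{q\in H}\xi_t(q)\ge w_t(H_t)/n\ge(1-\eps)^T\delta\,\area(H)/n$ and the prescribed choice of $\rho$. The displacement bound $\|v-\widetilde v\|<2\rho$ that you flag as the delicate step is likewise only asserted (via the figure caption) in the paper, so you are at the same level of rigor there, and your triangulation argument for $\max_{q\in H}\xi_t(q)\ge w_t(H_t)/O(n)$ merely makes explicit a fact the paper states without proof.
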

\begin{proof}
	Two sets contribute to the difference $H_t\setminus\widetilde{H}_t$: the set of small cells, and the truncated parts of the larges cells $\bigcup_{R\in\cL_t:~P_t(R)<T}R\setminus\widetilde{R}$.
	Note that the total area of the small cells is at most $ r_t\cdot4\rho D$. 
	On the other hand, for any $R\in\cL_t$, we have $\area(R)-\area(\widetilde R)\le 2\rho\cdot\prem(R)$, where $\prem(P)$ is the length of the perimeter of $R$. This inequality holds because $R\setminus\widetilde R$ is contained in the region within distance $2\rho$ from the boundary of $R$; see Figure \ref{fig9} for an illustration. It follows that
	\begin{align*}
	\area(H_t\setminus\widetilde H_t)&=4\rho D r_t +\sum_{R\in\cL_t}\area(R\setminus\widetilde R)\le 2\rho\cdot\sum_{P\in\cL_t}\prem(R)\\&\le 8\rho D r_t
	\le8D r_t\cdot \frac{\nu\delta(1-\varepsilon)^T\area(H)}{16(1-\nu/2) D n r_{\max}}\le\frac{\nu\delta(1-\varepsilon)^T\area(H)}{2(1-\nu/2)  n }\\
	&\le\frac{\nu\cdot w_t(H_t)}{2(1-\nu/2)  n }\le\frac{\nu}{2(1-\nu/2)}\max_{q\in H}\xi_t(q),
	\end{align*}
	by our selection of $\rho$. The claim follows.
\end{proof}
Claim~\ref{cl3}, together with~\raf{e4}, implies that 
\begin{align*}
\xi_t(p_{t+1})\ge(1-\nu)\max_{q\in H}\xi_t(q),
\end{align*}
as required is the proof of Lemma~\ref{l1}.
Note that, since the polygon $H$ is contained in a square of size $D$, the total number of points in $\Gamma$ we need to consider in each principal direction (horizontal and vertical) is at most $$\frac{D}{\rho}\le\frac{32(1-\nu/2) D^2nr_{\max}}{\nu\delta(1-\varepsilon)^T\area(H)}=2^{O(L)}\poly(n,h,\frac{1}{\delta}),$$ 
and thus the number of bits needed to represent each point of $\Gamma$ is $O(L)+\polylog(n,h,\frac{1}{\delta})$. 
Since the vertices of each cell $\widetilde{R}\in\widetilde\cells_t(H)$ lie on the grid, the bit length $L_t$ used in the computations above (in the implementation of the maximization oracle and hence to represent the new point $p_{t+1}$) and the overall running time is $\poly(L,n,h,\log\frac{1}{\delta})$, independent of $t$.
	

\section{Alternative Approaches}\label{alt}

\subsection{Standard Greedy Approach}\label{greedy}

The greedy algorithm works by iterating the following until a $(1-\delta)$-fraction of the area of $H$ is seen from some point in $P_t=\{p_1,p_2,\ldots\}$: at time $t$, add a point $p_{t+1}$ (approximately) maximizing the current area of the non-guarded part of $H$. It is easy to see that this gives $(O(\log\frac1\delta),1-\delta)$-approximation.
Indeed, let $p_1,\ldots,p_{t_f}$ be the points chosen by the greedy algorithm until a $(1-\delta)$-fraction of the polygon is guarded. Let $H_t$ be the part of the polygon that is not yet guarded at time $t\in\{0,1,\ldots,t_f\}$. Then at time time $t$, as $H_t$ can be guarded by the optimum solution, $\area(\Vis(p_{t+1}))\ge\frac{\area(H_t)}{\OPT}$, giving $\area(H_{t+1})\le(1-\frac1{\OPT})\area(H_{t})$. Iterating, we get $\area(H_{t_f})\le \delta \area(H)$ for $t_f\ge \OPT\ln\frac1\delta$. Note that that this does not use the fact that the range space $\cF_H$ has bounded VC-dimension.
\subsection{Direct Application of the Br\"{o}nnimann-Goodrich Algorithm }\label{sec:BG}

The original Br\"{o}nnimann-Goodrich algorithm~\cite{BG95}  works for range spaces $(Q,\cR)$ with finite point set $Q$. If one wants to apply this algorithm to the art gallery problem, there are two possible directions:  

\begin{itemize}
	\item Discrtization using a grid $\Gamma$, where the candidate set of guards are restricted to lie on $\Gamma$. The difficulty of this approach is that the size of an optimum set of guards {\it restricted} to $\Gamma$ might be much larger than the size of an optimum set without any restriction. This was the approach used in~\cite{BM16} to give a randomized $O(\log \OPT)$-apporximation algorithm under the general position assumption stated in the introduction.
	
	\item Extension of the Br\"{o}nnimann-Goodrich algorithm for continuous range spaces. It can be shown (see~\cite{E16}) that the algorithm can be extended to work for the  (full coverage version of) the art gallery problem, but the running time may depend on the area of the smallest cell $R$ of the arrangement created in the course of the algorithm. More precisely, the number of iterations is $O(\OPT\cdot\log\frac{\area(Q)}{\area(R)})$. Since $R$ can be arbitrarily small, there is no guarantee that such an extension will terminate in polynomial (ore even finite) time.    
\end{itemize}
\end{document}